\newtheorem{thm}{Theorem}
\newtheorem{lem}{Lemma}
\newtheorem{Def}{Definition}
\newtheorem{rem}{Remark}
\newtheorem{prop}{Proposition}
\newtheorem{coro}{Corollary}
\newtheorem{exam}{Example}
\newcommand{\tr}[1]{\text{Tr}\left(#1\right)}
\begin{document}

\title{Heisenberg Picture Approach to the Stability of Quantum Markov Systems%
\thanks{We gratefully acknowledge support by the Australian Research Council Centre of Excellence for Quantum Computation and Communication Technology (project number CE110001027),  Australian Research Council Discovery Project (project number DP110102322) and the Air Force Office of Scientific Research (grant numbers FA2386-09-1-4089 and FA2386-12-1-4075).}}

\author{Yu Pan$^\dagger$%
\thanks{$^\dagger$Research School of Engineering, Australian National University, Canberra, ACT 0200, Australia. {\tt\small
  \{yu.pan, zibo.miao\}@anu.edu.au}.}
\and  Hadis~Amini$^\ddagger$%
\thanks{$^\ddagger$Edward L. Ginzton Laboratory, Stanford University, Stanford, CA 94305, USA. {\tt nhamini@stanford.edu}.}
\and Zibo~Miao$^\dagger$
\and John~Gough$^*$
\thanks{$^*$Institute of Mathematics and Physics, Aberystwyth University
SY23 3BZ, Wales, UK. {\tt\small jug@aber.ac.uk}.}
\and Valery Ugrinovskii$^\sharp$%
\thanks{$^\sharp$School of Engineering and Information Technology,
  University of New South Wales at ADFA, Canberra,
ACT 2600, Australia. {\tt\small v.ugrinovskii@gmail.com}.}
\and Matthew~R.~James$^\S$%
\thanks{$^\S$ARC Centre for Quantum Computation and Communication Technology, Research School of Engineering, Australian National University,
Canberra, ACT 0200, Australia. {\tt\small matthew.james@anu.edu.au}.}}

\maketitle

\begin{abstract}
Quantum Markovian systems, modeled as unitary dilations in the quantum stochastic calculus of Hudson and Parthasarathy, have
become standard in current quantum technological applications. This paper investigates the stability theory of such systems.
Lyapunov-type conditions in the Heisenberg picture are derived in order to stabilize the evolution of system operators
as well as the underlying dynamics of the quantum states. In particular, using the quantum Markov semigroup associated with this
quantum stochastic differential equation, we derive sufficient conditions for the existence and stability of a unique and faithful invariant quantum
state. Furthermore, this paper proves the quantum invariance principle, which extends the LaSalle invariance principle to quantum systems in the
Heisenberg picture. These results are formulated in terms of algebraic constraints suitable for engineering quantum systems that are used in
coherent feedback networks.
\end{abstract}


\section{Introduction}\label{sec1}
The last two decades have witnessed a rapid development of quantum control
technologies, which have proved to be crucial in a large number of
quantum systems applications that require a high level of reliability, such as the
generation of on-demand quantum states, or the regulation of system performance
for quantum information processing
\cite{Wiseman09,Mabuchi05,Dong10,Altafini10,Zhang12}. Stability is central
to these quantum control systems. For example, quantum control tasks may
require stabilization of stochastic filtering process
\cite{mirra2007,Sayrin12,Handel05,Amini12}, of quantum oscillators in
optical systems \cite{Hamerly12,Serafini12}, or of a complex network
constructed via the coherent interconnection of quantum components
\cite{Yanagisawa03,James08,Nurdin09,ZhangJ12,Gough10}.

The traditional approach has been Schr\"{o}dinger picture Lyapunov techniques, that is, where
one defines a Lyapunov function as a positive function over the set of
states. This has lead to several important results on the stability of
quantum states in different control settings
\cite{BG07,mirra2007,wang2010,tico2013,Qi13}. In general, the main concern in
these control problems is the asymptotic behaviour of quantum systems in the
Schr\"{o}dinger picture.

In contrast, we wish to develop a Heisenberg picture Lyapunov approach which
exploits the fact that the Heisenberg picture more readily captures the physical dynamics,
and that this allows for a more direct extension of classical stability techniques.
We are interested primarily in open quantum systems and, in particular, the quantum stochastic calculus of Hudson and Parthasarathy \cite{HP1984}
gives the appropriate mathematical description. This moreover turns out to be
a very convenient set up for modeling open quantum coherent
networks. It has been shown in a number of references
\cite{Gough09,Zhang12} that evolution of such networks in the Heisenberg
picture can be associated with what is now sometimes referred to as the $(S,L,H)$ representation. (Here
$S$ is a scattering matrix and $L,H$ are coupling and Hamiltonian operators
of the system, respectively, and appear as coefficients in the quantum stochastic differential equation
for the unitary evolution process \cite{HP1984,Parth1992}. As we are interested in the average dynamics in
the vacuum state for the environment, we shall take $S=I$ for simplicity as only $L$ and $H$ appear in the
Lindbladian.) Ideally,
conclusions about stability properties of these operators should be made
according to their dynamics in the Heisenberg picture, particularly in the
form of conditioning on the infinitesimal generator of a Lyapunov operator
which will be introduced in Section \ref{sec3}. It is worth mentioning that the
Lyapunov operator is a generalization of the existing results about quantum
stability based on operator inequalities, which are closely related to the
dissipativity of the system \cite{JG10,PUJ1}. To fix ideas, suppose that the evolution of an observable $X\in \mathfrak{A}$ in the Heisenberg picture
is given by $:t \mapsto j_t (X)$ and that this may be described by a Langevin equation
\begin{equation*}
dj_t(X) = j_t( \mathcal{G} (X)) dt + \mathrm{Noise},
\end{equation*}
where $\mathcal{G}$ is a Lindblad generator and the ``Noise'' terms are martingale increments for the environment state.
Mirroring the approach to the classical stochastic stability
theory, the dynamics of the expectation of an operator can then be
established from the properties of the corresponding infinitesimal
generator of a Lyapunov operator $V$ and, for instance, an exponential stability criterion would be that
\begin{equation}\label{Gv}
\mathcal{G}(V)\leq -cV+dI,\quad c,d>0,
\end{equation}
see e.g., \cite{JG10,PUJ1,PUJ4a}. Here $I$ is the identity
operator, and we include a dissipation rate $d$. We point out a similarity
between (\ref{Gv}) and conditions that arise in the Lyapunov stability
theory of classical stochastic differential equations~\cite{Khas}.

Our aim in this paper is to further develop the Lyapunov stability theory
of quantum stochastic evolutions in the Heisenberg picture. First, we will
study an interplay between stability of states and stability of
operators. For example, we will employ the Lyapunov condition~(\ref{Gv}) to
infer the asymptotic behavior of quantum states from the corresponding
properties of
the operator $V$. In particular, we will present a stability analysis of
the invariant state of an open quantum system based on the Lyapunov method
and the quantum semigroup theory. In a sense, this contribution is in
parallel with the results in the classical stochastic stability theory,
such as the Foster-Lyapunov theory~\cite{MT-1993}, concerned with the
existence and stability of invariant probability measures of Markov
processes.

In addition to invariant states, we are also interested in invariant
sets of states or operators to which quantum evolutions converge. To
characterize the invariance property of the system, we develop a quantum
version of the LaSalle invariance principle in the Heisenberg
picture. Using this result, we are able to determine a limit set of state
trajectories, as well as explore the possibility of stabilizing two
non-commuting operators simultaneously. Similar conditions that employ
Lyapunov techniques have been developed for classical stochastic systems
\cite{Khas,Kushner67,Mao99}. The objective of our development is to pave the
way to the design of coherent feedback networks, as operators of a
quantum system form a non-commutative algebra. An alternative approach of
analyzing stability of state trajectories in the Schr\"{o}dinger picture is
often too difficult in this case.

The main results of this paper are formulated in the way such that the ground states of an operator $V$ or $W$ are stabilized, given that certain Lyapunov conditions are satisfied. This is directly relevant to a recent quantum information processing scheme \cite{Verstraete09}, where the task is either stabilizing the ground states of a Hamiltonian which encode the solution to a quantum computation problem, or robustly preparing entangled quantum states \cite{Verstraete09,Lin13} through engineering the dissipative property of the systems. The results obtained in this paper thus provide tools to design proper environmental couplings for these applications.

This paper is organized as follows: In Section \ref{sec2}, we briefly
review some facts about Markov dynamics of open quantum systems. In Section
\ref{sec3}, the definition of the Lyapunov operator is given. Then in
Section \ref{sec4}, we propose a Lyapunov condition to ensure the existence of an
invariant state. In Section \ref{sec5}, we prove the faithfulness and uniqueness of
an invariant state using certain non-degeneracy conditions imposed on
the diffusion coefficients of the quantum stochastic differential
equation. In Section \ref{sec6}, we derive the quantum invariance principle and
discuss its implications. In Section \ref{sec7}, we investigate the stability
within the invariant set and provide a sufficient condition to stabilize
the system to the ground state. The last section is devoted to conclusions.

\emph{Notations}. In this paper $\mathfrak{H}$ is a separable Hilbert space, and $\mathfrak{A}$
denotes the von Neumann algebra of operators acting on the Hilbert space $\mathfrak{H}$. The commutator of two operators $A$ and $B$ is written as
$[A,B]=AB-BA$, while $X^\dagger$ is the adjoint of an operator
$X$. $\mathfrak{M}^{'}=\{A\in \mathfrak{A} :[A,X]=0, X\in \mathfrak{ M} \}$ denotes the commutant of a von Neumann algebra
$\mathfrak{M}$ of operators. In particular we write $\mathbb{C}I$ for the trivial von Neumann algebra consisting of
multiples of identity. $\mathbb{C}$ is the set of complex numbers. A positive-semidefinite operator $X$ is indicated as $X\geq0$. We shall only work with normal states, and
typically write $\rho$ for the corresponding density operator so that the expectation of an
operator $X$ will be denoted as $\langle
X\rangle_{\rho}=\tr{\rho X}$, e.g. \cite{Sakurai1985}.

\section{Quantum Markov Systems}\label{sec2}
Consider the system defined on a Hilbert space $\mathfrak{H}_S$, and the
environment on a Fock space $\mathfrak{H}_B$ over $L^2 (\mathbb{R}_+ , dt ) $ corresponding to a single Boson field mode. The composite system can be
regarded as a single closed system whose dynamics are
characterized by a unitary evolution $U(t)$ on $\mathfrak{H}_S\otimes \mathfrak{H}_B$ obeying a quantum stochastic differential
equation \cite{HP1984,JG10}
\begin{equation*}
dU(t)=\{dB^\dagger L-L^{\dagger}dB-\frac{1}{2}L^\dagger Ldt-iHdt\}U(t).
\end{equation*}
Here $H$ is the Hamiltonian of the system, and $L$ describes the
coupling between the system and environment; $B(\cdot )$ and $B^\dagger (\cdot )$ are
the annihilation and creation process defined on $\mathfrak{H}_B$. In the Heisenberg picture an operator $X(t)=j_t(X)$ of the system evolves as $j_t(X)=U(t)^\dagger (X\otimes
I)U(t)$. Given the interaction Hamiltonian of the
combined system, the explicit dynamical equation for $X(t)$ can be
written as
\begin{equation}~\label{eq:dynone}
dj_t(X)=j_t(\mathcal{G}(X))dt+ j_t (\mathcal{B}(X))dW_1(t) + j_t (\mathcal{C}(X))dW_2(t).
\end{equation}
Here the following notation is used
\begin{eqnarray}~\label{eq:generator}
\mathcal{G}(X) &=& -i[X,H]+\mathfrak{L}(X),\\ \notag
\mathcal{B}(X) &=& \frac{1}{2}([X,L]+[L(t)^\dagger,X]),\\ \notag
\mathcal{C}(X) &=& \frac{i}{2}(-[X,L]+[L^\dagger,X]).
\end{eqnarray}
with
\begin{equation}~\label{eq:Lindblad}
\mathfrak{L} (X)= L^\dagger{X}L-\frac{1}{2}L^\dagger{L}X-\frac{1}{2}XL^\dagger{L}.
\end{equation}
We have written the noise increments in quadrature form, that is, in terms of
\begin{equation*}
dW_1(t)=dB(t)+dB^\dagger(t), \quad dW_2(t)=i(dB(t)^\dagger-dB(t)),
\end{equation*}
with all increments understood in the It\={o} sense.

Alternatively, we can characterize the average evolution of $X(t)$ by the semigroup $T_t$ acting as
\begin{equation*}
T_t(X)=\mathbb E_0(U(t)^\dagger(X\otimes I)U(t)).
\end{equation*}
$\mathbb
E_0$ is a conditional expectation on the given initial algebra $\mathfrak{A}$ and the
initial vacuum state $|0\rangle\langle0|$ of $\mathfrak{H}_B$. The infinitesimal generator of this Markov semigroup is then given
by $\mathcal{G}$. The dissipation functional of the semigroup is defined as \cite{Lind1976,Frigerio78}
\begin{equation}
\mathfrak{D}(X)=\mathcal{G}(X^\dagger X)-\mathcal{G}(X^\dagger)X-X^\dagger\mathcal{G}(X).
\label{dissip.functional}
\end{equation}
For a completely positive semigroup $T_t$, we have $\mathfrak{D}(X)\geq0,X\in\mathfrak{A}$. The dissipation functional characterizes the irreversible nature of the quantum Markov process, and consequently the system is dissipative if $\mathfrak{D}\neq0$.

The corresponding semigroup in the predual space of the trace
class operators (the state space) is
denoted as $T_{*t}(\rho)=\rho_t$. The support projection $P_\mathrm{sup}$ of a state $\rho$ is defined as the smallest projection (in the sense that $P_\mathrm{sup}\le P$) to which the state assigns probability 1.

\begin{Def}[\cite{fagnola2003quantum}]
A state $\rho_I$ is an invariant state, if it satisfies the condition
$T_{*t}(\rho_I)=\rho_I$ $\forall t\ge 0$. A state $\rho$ is faithful in $\mathfrak{A}$ if $\tr{\rho A}=0$ implies $A=0$ for any
positive operator $A\in\mathfrak{A}$. In other words, $\rho$ is faithful if the support
projection of $\rho$ is the identity operator in the space of bounded operators on the underlying Hilbert space.
\end{Def}

The faithfulness of an invariant state is essential to the analysis of the
asymptotic
behaviour of a quantum Markov system. For example, if the system
possesses a faithful invariant state, then its ergodic properties and the
problem of convergence to equilibrium can be studied for this system
\cite{Alberto1982}. Moreover, if this faithful invariant state is unique,
then it is the only equilibrium state of the system \cite{Fagnola2004}.

\section{Quantum Lyapunov Operators and Stability in the Heisenberg Picture}\label{sec3}
In classical theory, stability refers to the property that the trajectories of the dynamical systems will remain near an equilibrium point $x_e$ if the initial states $x_0$ are near $x_e$. A stronger notion is asymptotic stability which additionally requires that the trajectories that start near the equilibrium state $x_e$ will converge to $x_e$. In practice it
is often too complicated for nonlinear systems to solve the dynamical equations directly, and the main tool for proving stability of
these systems is Lyapunov theory \cite{Khas,Kushner67,Thygesen97asurvey} without finding the trajectories. Generally
speaking, if a given system possesses a Lyapunov function $V(x)$,
with certain conditions on $V(x)$ and its convective derivative $\dot V(x)$,
then the trajectories of the system state $x_t$ will be stable in some
sense. For example, any continuous scalar function $V:\mathbb{R}^n\rightarrow
\mathbb{R}$ having the property
\begin{equation*}
V(0)=0,\quad V(x)>0,\quad \dot{V}(x)<0,\quad x\in\mathbb{R}^n\backslash\{0\}
\end{equation*}
can be chosen as a Lyapunov function for the purpose of establishing
asymptotic stability of the zero equilibrium state of the
system. Alternatively, the Lyapunov function can be defined as a continuous
function $V(x)$ satisfying \cite{Khas,Mali09}
\begin{equation*}
a(|x|)\leq V(x)\leq b(|x|), \quad \dot{V}(x)\leq 0,
\end{equation*}
where $a(\cdot),b(\cdot)$ are strictly increasing functions with
$a(0)=b(0)=0$ and $a(\infty)=b(\infty)=\infty$. In both cases, if such a
$V$ exists, then any trajectory $x_t$ will converge to $0$.

As noted in the introduction, stability of quantum systems
may be considered within either the Schr\"{o}dinger or Heisenberg pictures. In this paper, our intention is
to develop the Heisenberg picture approach and tools for studying stability
of state trajectories $\rho_t$ within the underlying Schr\"{o}dinger
picture. To this end, we define the Lyapunov operator in the Heisenberg
picture.
\begin{Def}
A quantum Lyapunov operator $V$ is an observable (self-adjoint operator) on a Hilbert space $\mathfrak{H}$ for
which the following properties hold:
\begin{enumerate}
\item $V\in D(\mathcal G)$,
\item  $V\geq 0$,
\item $\mathcal{G}(V)\leq 0$.
\end{enumerate}
\end{Def}
$D(\mathcal G)$ is the domain of the generator. Note that $\langle V\rangle_\rho\geq0$ for all states $\rho$. In the next two sections, we will show that the existence and stability of the invariant state of the system can be proved using a Lyapunov operator.

One advantage of the Heisenberg approach is that the stability of operators in the
Heisenberg picture may be studied, and not just stability of states. This is of
practical importance, especially within the framework of
quantum coherent networks. Although the stability of operators has been
studied using operator semigroup theory, to the best of our knowledge the
first approach to stabilization of quantum systems via Lyapunov methods is in \cite{JG10}.

We now introduce our concept of stability of operators in the Heisenberg picture.
\begin{Def}
Given a set of positive operators $\mathfrak{S}$, the system is $\mathfrak{S}$-stable if $\langle j_t(A) \rangle_{\rho}$ is bounded for each $A \in \mathfrak{S}$ and any initial state $\rho$.
\end{Def}
In our development of Lyapunov quantum stability, we will make use of the notion of quantum coercivity defined in terms of the spectral decomposition of a Lyapunov operator $V$.
\begin{Def}
Consider a positive operator $V$ with the spectral decomposition
$V=\sum_iv_iP_i$, $v_i$ being the eigenvalues of $V$. $V$ is coercive if
there exists a strictly increasing function $k(\cdot)$ with
$\lim_{i\to \infty}k(i)=\infty$ such that
$v_i\geq k(i), i>i_0$, for  some $i_0$.
\end{Def}
Note that the definition of quantum coercivity is analogous to its
classical counterpart \cite{Renardy04,MT-1993}
\begin{equation*}
V(x)\geq c(|x|)
\end{equation*}
with $c(|x|)$ being a strictly increasing function that goes to infinity as
$|x|\rightarrow\infty$.

From Definition~$4$, it follows that if the system possesses a coercive
Lyapunov operator $V=\sum_iv_iP_i$, then the set of operators
$\mathfrak{S}=\{A\geq0:\tr{AP_i}\leq \epsilon k(i),\epsilon>0\}$ are bounded in
expectation, hence the system is $\mathfrak{S}$-stable. Also, in Section \ref{sec6} we will prove that when given a Lyapunov
operator $V$, the set $\mathfrak{S}$ of operators defined by $\mathfrak{S}=\{W\geq0:\mathcal{G} (V)\leq -W\}$ are bounded in expectation. Indeed, the expectation $\langle
W(t)\rangle_{\rho}$ will converge to zero according to quantum LaSalle
invariance principle. Hence, a conclusion about stability of the system can
be made.

As in the classical case, one may have a number of variations on the definition of a Lyapunov function,
depending on the context of stability property which one is interested
in. The definition of Lyapunov operator can be relaxed for
quantum stability analysis in different contexts. Therefore, we still call
$V$ a Lyapunov operator when the property $\mathcal{G}(V)\leq0$ is replaced
by a weaker condition (\ref{Gv}), as in the following example.

\begin{exam}\rm
Consider a quantum oscillator with the Hamiltonian given by
$H={\omega}a^\dagger{a}$, and the coupling operator $L=
\alpha{a}+\beta{a^\dagger}$, $a$ and $a^\dagger$ are annihilation and creation operators respectively, and they satisfy the commutation relation $[a,a^\dagger]=1$. Choose the candidate Lyapunov operator as the photon number operator $V=a^\dagger{a}$ which represents the energy of the system. By calculation we find $\mathcal{G}(V)=-(|\alpha|^2-|\beta|^2)V+|\beta|^2I$. If $|\alpha|^2>|\beta|^2$, $\mathcal{G}(V)$ satisfies the condition (\ref{Gv}) and $V$ becomes a Lyapunov operator in this problem. Furthermore, the system is $\mathfrak{S}$-stable with $\mathfrak{S}$ being the von Neumann algebra generated by $V$ since $\langle V(t)\rangle$ is bounded \cite{JG10}. If $|\alpha|^2<|\beta|^2$, $\langle V(t)\rangle$ is unbounded and the system is unstable in energy.

In the sequel, a Lyapunov operator $V$ for which condition (\ref{Gv}) holds is referred to as a quantum Lyapunov operator in the weak sense.
\end{exam}

\section{Quantum Tightness and the Existence of Invariant States}\label{sec4}
As a first step to study the stability of quantum states, we derive certain conditions to guarantee the existence of
invariant state.
First we present the definition of quantum tightness \cite{Meyer1995}.
\begin{Def}
A sequence $(\rho_n)_{n\geq 1}$ in the Banach space of trace-class
operators on a Hilbert space $\mathfrak{H}$ is tight if for every $\epsilon>0,$ there exists a
finite rank projection $P$ and $n_0>0$ such that
$\tr{\rho_nP}>1-\epsilon$ for all $n\geq n_0.$
\end{Def}
Obviously, trajectories of states corresponding to finite-dimensional
systems are tight. We will refer to the following lemma. \cite{Meyer1995}
\begin{lem}
A tight sequence $(\rho_n)_{n\geq 1}$ of quantum states admits a
subsequence converging to a quantum state.
\end{lem}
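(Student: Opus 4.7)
The plan is to adapt the classical Prokhorov compactness theorem to the non-commutative trace-class setting. Tightness provides, for each $k\geq 1$, a finite-rank projection $P_k$ and an index $N_k$ with $\tr{\rho_n P_k}>1-1/k$ for all $n\geq N_k$; by replacing $P_k$ with the orthogonal projection onto the span of the ranges of $P_1,\dots,P_k$, I may assume the $P_k$ form an increasing sequence of finite-rank projections, still with the same tightness bound.

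For each fixed $k$, the compressed operators $P_k\rho_n P_k$ live in the finite-dimensional operator space on the range of $P_k$ and are uniformly bounded in trace norm by $1$. By Bolzano--Weierstrass they admit a convergent subsequence. A standard diagonal extraction then produces a single subsequence $(\rho_{n_j})$ such that for every $k$, $P_k\rho_{n_j}P_k\to\sigma_k$ in trace norm as $j\to\infty$, with $\sigma_k\geq 0$. The limits are compatible: $P_k\sigma_{k'}P_k=\sigma_k$ for $k\leq k'$ since $P_kP_{k'}=P_k$, and $\tr{\sigma_k}\geq 1-1/k$.

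To assemble a candidate limit state, I would control the off-diagonal pieces. For a positive trace-one $\rho$ with $\tr{\rho(I-P_k)}\leq 1/k$, factoring $\rho=\rho^{1/2}\rho^{1/2}$ and applying Cauchy--Schwarz for the Hilbert--Schmidt inner product gives
\begin{equation*}
\|P_k\rho(I-P_k)\|_1\leq \sqrt{\tr{P_k\rho}}\sqrt{\tr{(I-P_k)\rho}}\leq 1/\sqrt{k},
\end{equation*}
with the same bound for $\|(I-P_k)\rho P_k\|_1$, while $\|(I-P_k)\rho(I-P_k)\|_1=\tr{(I-P_k)\rho}\leq 1/k$; altogether $\|\rho-P_k\rho P_k\|_1\leq 2/\sqrt{k}+1/k$. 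Applied to $\rho_{n_j}$ for $n_j\geq N_k$ this bound is uniform in $j$. Passing to the limit in $j$ shows that the $(\sigma_k)$ form a Cauchy sequence in trace norm, so they converge to some positive trace-class operator $\rho_\infty$ with $\tr{\rho_\infty}=1$. A three-epsilon argument combining this uniform tail control with $P_k\rho_{n_j}P_k\to\sigma_k$ and $\sigma_k\to\rho_\infty$ then yields $\rho_{n_j}\to\rho_\infty$ in trace norm; in particular $\rho_\infty$ is a quantum state.

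The principal technical point to watch is that tightness is precisely the uniform integrability statement needed for the three-epsilon step: without it the sequence could leak mass to infinity and the diagonal limit would merely be a sub-state of trace strictly less than $1$. The compatibility of the $\sigma_k$ across $k$, and the interplay between operator products and trace-norm estimates via Cauchy--Schwarz, are the only other places where minor care is required; otherwise the proof follows the blueprint of the classical Prokhorov theorem quite closely.
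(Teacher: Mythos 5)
Your proof is correct. Note that the paper itself offers no argument for this lemma --- it is quoted directly from Meyer's \emph{Quantum probability for probabilists} --- so there is no in-paper proof to compare against; your diagonal-extraction argument with increasing finite-rank projections is in fact the standard proof of this non-commutative Prokhorov-type theorem. All the key steps check out: enlarging the $P_k$ to an increasing family preserves the tightness bound because $\tr{\rho_n Q}\geq\tr{\rho_n P}$ whenever $Q\geq P$ and $\rho_n\geq0$; the Cauchy--Schwarz estimate $\|P_k\rho(I-P_k)\|_1\leq\|P_k\rho^{1/2}\|_2\,\|\rho^{1/2}(I-P_k)\|_2$ is the right tool for the off-diagonal corners; and the uniform-in-$j$ tail bound $\|\rho_{n_j}-P_k\rho_{n_j}P_k\|_1\leq 2/\sqrt{k}+1/k$ is exactly what makes both the Cauchy property of $(\sigma_k)$ and the final three-epsilon step work. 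One remark worth making: your argument delivers convergence in \emph{trace norm}, which is strictly stronger than the weak-$*$ convergence that would suffice for the paper's application (Theorem 1 only needs a sequential limit point in the state space), so if anything you have proved a sharper statement than the one asserted.
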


\begin{thm}[\cite{fagnola2003quantum}]
If the system possesses a tight family of quantum states,
$(\rho_t,t>0)$ then the system possesses at least one invariant state.
\end{thm}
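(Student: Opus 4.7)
The plan is to apply a Krylov--Bogolyubov style averaging argument, which is the standard route for proving existence of invariant states for Markov semigroups when one has a tight family of trajectories.

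First I would form the Cesàro time-averages
\begin{equation*}
\bar{\rho}_T \;=\; \frac{1}{T}\int_0^T \rho_t\, dt, \qquad T>0,
\end{equation*}
where the integral is understood in the Bochner sense in the Banach space of trace-class operators on $\mathfrak{H}$. Each $\bar{\rho}_T$ is a state (positive, trace one) because the set of states is convex and closed in trace norm. The next step is to show that the family $(\bar{\rho}_T)_{T>0}$ is itself tight. Given $\epsilon>0$, tightness of $(\rho_t)_{t>0}$ supplies a finite rank projection $P$ and $t_0$ with $\mathrm{Tr}(\rho_t P)>1-\epsilon$ for all $t\geq t_0$; averaging and bounding the contribution from $[0,t_0]$ by $t_0/T$ yields $\mathrm{Tr}(\bar{\rho}_T P)>1-2\epsilon$ for $T$ large, so tightness is inherited.

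Next I would pick any sequence $T_n\to\infty$ and apply Lemma~1 to extract a subsequence (still labelled $T_n$) such that $\bar{\rho}_{T_n}\to\rho_\infty$ for some quantum state $\rho_\infty$. The remaining task is to verify that $T_{*s}(\rho_\infty)=\rho_\infty$ for every $s\geq 0$. The key identity, obtained by using the semigroup property $T_{*s}(\rho_t)=\rho_{t+s}$ and changing variables under the integral, is
\begin{equation*}
T_{*s}(\bar{\rho}_T)-\bar{\rho}_T \;=\; \frac{1}{T}\int_{T}^{T+s}\rho_t\, dt \;-\; \frac{1}{T}\int_{0}^{s}\rho_t\, dt .
\end{equation*}
Each of the two pieces on the right has trace norm at most $s/T$, so the difference tends to $0$ in trace norm as $T\to\infty$. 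Combining this with the convergence $\bar{\rho}_{T_n}\to\rho_\infty$ and the trace-norm contractivity (hence continuity) of $T_{*s}$ forces $T_{*s}(\rho_\infty)=\rho_\infty$, proving that $\rho_\infty$ is an invariant state.

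The main obstacle I anticipate is technical rather than conceptual: one must be careful that the convergence $\bar{\rho}_{T_n}\to\rho_\infty$ extracted from Lemma~1 is strong enough (trace norm, or at least a topology in which $T_{*s}$ is continuous) to pass to the limit in the displayed identity above. If Lemma~1 as stated only delivers weak-$*$ convergence of the subsequence, one needs to recall that on the set of states weak-$*$ convergence to a state coincides with trace-norm convergence (a standard consequence of tightness together with $\mathrm{Tr}(\rho_n)=\mathrm{Tr}(\rho_\infty)=1$), which then legitimises taking the limit and completes the proof.
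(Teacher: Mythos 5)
Your proposal is correct and follows essentially the same route as the paper: the paper likewise forms the Ces\`aro averages $\frac{1}{t_n}\int_0^{t_n}\rho_{t'}\,dt'$, notes they inherit tightness and hence have sequential limit points, and cites Proposition~2.3 of \cite{fagnola2003quantum} for the fact that any such limit point is invariant. You have simply written out in full the semigroup identity and the topological bookkeeping (weak-$*$ versus trace-norm convergence of states under tightness) that the paper delegates to that citation.
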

\begin{proof} As $\rho_t$ is tight, any sequence of states
$\rho_{t_n}=\frac{1}{t_n}\int_0^{t_n}\rho_{t^{'}}dt^{'}$ is also tight and therefore has normalized
sequential limit points. These states are invariant because any sequential limit point of
$\frac{1}{t_n}\int_0^{t_n}\rho_{t^{'}}dt^{'}$ is invariant, according to
Proposition~$2.3$ in \cite{fagnola2003quantum}.
\end{proof}

Based on these properties, we can develop the condition on the tightness of
general quantum systems. Recall the inequality (\ref{Gv}) involving the (Lindblad)
generator $\mathcal{G}$ of a quantum Markov process
defined by Equation~\eqref{eq:generator}. Suppose $V$ is a Lyapunov operator in the weak sense, i.e., $\mathcal{G}(V)\leq -cV+dI, c>0$. By integrating  (\ref{Gv})
we obtain the following inequality \cite{JG10}
\begin{equation*}
\langle V(t)\rangle\leq e^{-ct}\langle V(0)\rangle+\frac{d}{c},
\end{equation*}
which means $\langle V(t)\rangle\leq\lambda$ for any $t\geq0$ and some
positive $\lambda$. Next we will show that the condition (\ref{Gv}) not only
gives us the mean stability of $V$ but also implies tightness of the
corresponding collection of quantum states $\{\rho_t, t\ge 0\}$.

First, let us consider the following example.
\begin{exam}\rm
The photon number operator for a quantum oscillator can be written as
$V=\sum_0^{\infty}i|i\rangle\langle{i}|$ where $|i\rangle$ is the
photon number state. If $\langle V(t)\rangle\leq{c},c\geq0$, then we have
$\sum_{i=0}^{\infty}i\rho_t^{ii}\leq{c}$ for an
  arbitrary sequence of states $\rho_{t}$; here
  $\rho_t^{ii}=\tr{\rho_t|i\rangle\langle{i}|}$.
 For an
    arbitrary $\epsilon>0$, choose $m$
such that $m=\left[\frac{c}{\epsilon}\right]$, where $[x]$ is the nearest
  integer to $x$ that is greater than $x$. Through
$m\sum_{i=m}^{\infty}\rho_t^{ii}\le\sum_{i=m}^{\infty}i\rho_t^{ii}\leq{c}$, we conclude $\sum_m^{\infty}\rho_t^{ii}\le\epsilon$ for any $t$. The
finite rank projection $P=\sum_{i=0}^{m}|i\rangle\langle{i}|$
then satisfies the condition $\tr{\rho_{t}P}>1-\epsilon$, which indicates
that the sequence $\rho_{t}$ is tight. Hence the corresponding state
trajectory of the quantum oscillator gives rise to an invariant state for
the oscillator.
\end{exam}

The example shows that under certain conditions, the stability of an
operator in the mean sense may
imply tightness of a corresponding state trajectory.
The inequality
$$m\sum_m^{\infty}\rho_t^{ii}\le\sum_m^{\infty}i\rho_t^{ii}\leq{c}$$
is essential in this example. In fact, the spectral property
of the above operator is the key element connecting tightness and
stability. We generalize this idea in the following theorem.
\begin{thm}
Suppose the evolution of a positive observable $V$ on a separable Hilbert
space $\mathfrak{H}$, with spectral decomposition as $V=\sum_{i=0}^{\infty}v_iP_i$, is stable
 in the mean, that is, there exists a constant $c\ge0$ such that $\langle V(t)\rangle_\rho\leq{c}$ with $\rho$ as the initial state. If $V$ is coercive, then any sequence $\rho_t$ is tight which implies the existence of an invariant state.
\end{thm}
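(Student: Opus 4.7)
The plan is to lift the argument from Example~2 to the coercive setting: once the mean bound $\langle V(t)\rangle_\rho\le c$ is combined with coercivity, the mass of $\rho_t$ must concentrate on the finite-dimensional spectral subspaces associated with the small eigenvalues of $V$, giving tightness, and existence of an invariant state is then a direct appeal to Theorem~1.

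Concretely, I would write $\langle V(t)\rangle_\rho=\sum_i v_i\,\tr{\rho_t P_i}$ and split the sum at a threshold $N>i_0$. By coercivity, $v_i\ge k(i)\ge k(N)$ for $i\ge N$ since $k$ is strictly increasing, so
\begin{equation*}
c\ \ge\ \langle V(t)\rangle_\rho\ \ge\ \sum_{i\ge N}v_i\,\tr{\rho_t P_i}\ \ge\ k(N)\sum_{i\ge N}\tr{\rho_t P_i},
\end{equation*}
which yields the uniform tail estimate $\sum_{i\ge N}\tr{\rho_t P_i}\le c/k(N)$. Given $\epsilon>0$, choose $N$ large enough that $c/k(N)<\epsilon$, and set $P_{(N)}=\sum_{i<N}P_i$. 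Then $\tr{\rho_t P_{(N)}}>1-\epsilon$ for every $t\ge0$, which is the defining inequality of tightness for any sequence $(\rho_{t_n})$ extracted from the trajectory.

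Having obtained tightness, I would invoke Theorem~1 verbatim: the time-averages $\bar\rho_{t_n}=\tfrac{1}{t_n}\int_0^{t_n}\rho_{t'}\,dt'$ inherit the same tail bound by convexity of $\rho\mapsto\tr{\rho P_{(N)}}$, hence form a tight sequence, and by Lemma~1 admit a subsequential state limit which is invariant by the argument already recorded in the proof of Theorem~1.

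The one delicate point, and what I regard as the main obstacle, is confirming that $P_{(N)}$ is \emph{finite rank}, i.e.\ that each spectral projection $P_i$ attached to a distinct eigenvalue $v_i$ of the coercive operator $V$ is finite-dimensional. This is the natural reading of Definition~4 (as illustrated by the number operator, where the enumeration of eigenvalues implicitly resolves multiplicity), and I would either fold it into the standing coercivity hypothesis or observe that any infinite-dimensional eigenspace attached to some $v_i$ may be refined into a finite-multiplicity enumeration $\tilde v_j\ge k(j)$ without affecting the tail estimate above. Once finite rank of $P_{(N)}$ is in place, every remaining step reduces to the routine inequalities already displayed.
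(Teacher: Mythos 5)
Your proposal is correct and follows essentially the same route as the paper's own proof: bound the spectral tail $\sum_{i\ge N}\tr{\rho_t P_i}$ by $c/k(N)$ using coercivity and the mean bound, choose $N$ so this is below $\epsilon$, and conclude tightness and then invariance via Theorem~1. Your explicit attention to the finite-rank requirement on $P_{(N)}=\sum_{i<N}P_i$ is a point the paper leaves implicit, and your proposed resolutions (refining any infinite-dimensional eigenspace into a finite-multiplicity enumeration, or folding finite multiplicity into the coercivity hypothesis) are sound.
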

\begin{proof}
The proof is similar to the proof used in Example~$2$ to show
tightness. The condition that $\langle V(t)\rangle_\rho\leq{c}$ means
$\sum_{i=0}^{\infty}v_{i}\rho_t^{ii}\leq{c}$ for $t\geq0$. Here $\rho^{ii}=\tr{\rho P_i}$
denotes the projection $P_i$ on the state $\rho$. Since $V$ is coercive, there exists some
$N_0$ such that $v_i$ is increasing for $i\geq N_0$ and
$v_i\rightarrow\infty$ as $i\rightarrow\infty$. Choose
$m=\max\{N_0,\inf\{i:v_{i}\geq\frac{c}{\epsilon}\}\}$. Then
$v_{m}\sum_m^{\infty}\rho_t^{ii}\le\sum_m^{\infty}v_{i}\rho_t^{ii}\leq{c},$
so we find that $\sum_{i=m}^{\infty}\rho_t^{ii}\le\epsilon$. Letting
  $P=\sum_{i=0}^{m-1}P_i$, we obtain $\tr{\rho_tP}>1-\epsilon$; i.e, $\rho_t$ is tight. The result of the theorem then follows from Theorem~1.
\end{proof}

It follows from Theorem~$2$ that the existence of a coercive Lyapunov operator in the weak sense (\ref{Gv}) guarantees the existence of an invariant state. This prompts the question as to under what condition such an invariant state is unique and/or faithful. This question is addressed in the next section.

\section{Stability of Invariant States}\label{sec5}
In this section, we obtain some conditions to guarantee the
faithfulness and uniqueness of an invariant state.

For a particular invariant state $\rho_I$, its support
projection is denoted as $P_I$.
We shall need the following proposition.
\begin{prop}[see e.g.,~\cite{fagnola2003quantum}]
The support projection of an invariant state is subharmonic. That is, $T_t(P_I)\geq P_I$.
\end{prop}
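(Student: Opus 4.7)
The plan is to exploit the duality between the Heisenberg and Schrödinger semigroups $T_t$ and $T_{*t}$ together with the defining property of the support projection. The only nontrivial step is a block-matrix positivity argument at the end.

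First I would record three basic facts. (a) Since $T_t$ is a completely positive, unital semigroup and $0 \le P_I \le I$, one has $0 \le T_t(P_I) \le I$, so the operator $I - T_t(P_I)$ is positive. (b) Invariance of $\rho_I$ together with the duality $\tr{T_{*t}(\rho) X} = \tr{\rho T_t(X)}$ gives
\begin{equation*}
\tr{\rho_I T_t(P_I)} = \tr{T_{*t}(\rho_I) P_I} = \tr{\rho_I P_I} = 1,
\end{equation*}
so that $\tr{\rho_I (I - T_t(P_I))} = 0$. (c) By the defining property of the support projection, whenever $A \ge 0$ satisfies $\tr{\rho_I A} = 0$ one has $P_I A P_I = 0$ (equivalently, $A^{1/2} P_I = 0$). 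Applying this with $A = I - T_t(P_I)$ yields $P_I (I - T_t(P_I)) P_I = 0$, i.e.\ $P_I T_t(P_I) P_I = P_I$.

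Next I would decompose $\mathfrak{H} = P_I \mathfrak{H} \oplus (I-P_I)\mathfrak{H}$ and write
\begin{equation*}
T_t(P_I) = \begin{pmatrix} A & B \\ B^\dagger & C \end{pmatrix},\qquad 0 \le T_t(P_I) \le I,
\end{equation*}
where from the previous paragraph $A = I_{P_I\mathfrak{H}}$. Then the upper-left block of the positive operator $I - T_t(P_I)$ is zero, and the standard fact that a vanishing diagonal block of a positive operator forces the corresponding off-diagonal block to vanish gives $B = 0$. Therefore
\begin{equation*}
T_t(P_I) = \begin{pmatrix} I & 0 \\ 0 & C \end{pmatrix} \ge \begin{pmatrix} I & 0 \\ 0 & 0 \end{pmatrix} = P_I,
\end{equation*}
which is the desired subharmonicity.

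The only subtle step is the last one, where I have to upgrade the scalar identity $\tr{\rho_I(I-T_t(P_I))}=0$ to the operator inequality $T_t(P_I)\ge P_I$; this is where the support-projection characterization and the off-diagonal vanishing lemma for positive $2\times2$ block operators do the real work. The duality and unitality steps are routine, so I expect no difficulty there.
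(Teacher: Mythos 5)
Your proof is correct. Note that the paper does not actually prove this proposition --- it is stated with a citation to Fagnola--Rebolledo and used as a black box --- so there is no in-paper argument to compare against; your write-up supplies the standard proof from that literature. All the steps check out: unitality and positivity of $T_t$ give $0\le T_t(P_I)\le I$; duality plus invariance give $\tr{\rho_I\left(I-T_t(P_I)\right)}=0$; faithfulness of $\rho_I$ on the corner $P_I\mathfrak{A}P_I$ (which follows directly from the definition of the support projection via the spectral decomposition of $\rho_I$) upgrades this to $P_I T_t(P_I) P_I = P_I$; and the vanishing-diagonal-block lemma for positive $2\times 2$ block operators kills the off-diagonal blocks, yielding $T_t(P_I)\ge P_I$. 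The one step you might spell out if this were to be included is the faithfulness claim in (c), i.e.\ why $\tr{\rho_I A}=0$ with $A\ge 0$ forces $P_I A P_I=0$: writing $\rho_I=\sum_k\lambda_k|e_k\rangle\langle e_k|$ with $\lambda_k>0$ and $\{e_k\}$ spanning $P_I\mathfrak{H}$ gives $\langle e_k|A|e_k\rangle=0$ for all $k$, hence $A^{1/2}e_k=0$ and $A^{1/2}P_I=0$. With that remark included, the argument is complete.
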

The above property of the support projection can be expressed
in terms of the generator $\mathcal{G}$ of the semigroup $T_t$ as
$\mathcal{G}(P_I)\geq0$.

\medskip

\subsection{Stability of invariant states of finite-dimensional systems}
For a finite-dimensional system with the underlying Hilbert space $\mathfrak{H} =\mathbb{C}^n$, the following
  theorem determines faithfulness and uniqueness of an invariant
state.
\begin{Def}
A state $\rho$ is said to be globally attractive if all system trajectories asymptotically converge to $\rho$ for any initial state.
\end{Def}
\begin{thm}~\label{thm:three}
Suppose $\mathfrak{H}=\mathbb{C}^n$. If $PL^\dagger(I-P)LP\neq0$ for any non-trivial projection $P$, then the invariant state $\rho_I$ is faithful and unique.
\end{thm}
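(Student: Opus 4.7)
The plan is to treat faithfulness and uniqueness separately, with faithfulness done first and uniqueness following as a consequence.

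For faithfulness, I would start with two ingredients: (i) Proposition 1 gives $\mathcal{G}(P_I)\ge 0$ for the support projection of an invariant state $\rho_I$, and (ii) invariance of $\rho_I$ means $\tr{\rho_I\,T_t(A)}=\tr{\rho_I A}$ for every operator $A$, which differentiated at $t=0$ yields $\tr{\rho_I\,\mathcal{G}(P_I)}=0$. Since $\rho_I$ has support $P_I$ and $\mathcal{G}(P_I)$ is positive semidefinite, the only way to reconcile these two facts is $P_I\,\mathcal{G}(P_I)\,P_I=0$. Next I would plug in the explicit form of $\mathcal{G}$ from (\ref{eq:generator})--(\ref{eq:Lindblad}). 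The Hamiltonian contribution $-iP_I[P_I,H]P_I$ vanishes identically, while the dissipative part, after writing $L^\dagger L=L^\dagger(P_I+(I-P_I))L$, collapses to $-P_I L^\dagger(I-P_I)LP_I$. Hence $P_I L^\dagger(I-P_I)LP_I=0$, and the hypothesis of the theorem forces $P_I$ to be trivial; since a state has nonzero support, $P_I=I$, so $\rho_I$ is faithful.

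For uniqueness, suppose $\rho_1\ne\rho_2$ are two invariant states; by the step above both are faithful. I would then set $t^\ast=\sup\{t\ge 0:\rho_1-t\rho_2\ge 0\}$. Because $\rho_1$ is faithful its smallest eigenvalue is strictly positive and $\rho_2$ is bounded, so $t^\ast>0$; by continuity $\rho_1-t^\ast\rho_2$ is positive semidefinite but not faithful (its smallest eigenvalue has dropped to zero), and it is an invariant self-adjoint operator. If $t^\ast<1$, normalization by $(1-t^\ast)>0$ yields an invariant state whose support projection is a proper, nontrivial projection, contradicting the faithfulness conclusion just established. If $t^\ast\ge 1$, then $\rho_1-\rho_2$ is positive semidefinite with zero trace, forcing $\rho_1=\rho_2$. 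Either alternative contradicts $\rho_1\ne\rho_2$.

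The step I expect to be the crux is the algebraic identification $P_I\mathcal{G}(P_I)P_I=-P_I L^\dagger(I-P_I)LP_I$, which is what couples the abstract subharmonicity from Proposition 1 to the concrete hypothesis of the theorem; everything else is essentially bookkeeping. The uniqueness argument is standard, but it is worth noting that it relies on the finite-dimensional structure both for the existence of $t^\ast$ as a finite positive number and for the identification of faithfulness with a strictly positive minimum eigenvalue. No spectral-gap or ergodicity hypothesis on $T_t$ is needed beyond what the hypothesis already provides.
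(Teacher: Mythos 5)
Your proof is correct. The faithfulness half follows the paper's own route in all essentials: both arguments hinge on the identity $P_I\mathcal{G}(P_I)P_I=-P_IL^\dagger(I-P_I)LP_I$ (the paper packages the right-hand side as $-P_I\mathfrak{D}(P_I)P_I$ via the dissipation functional, you compute it directly from the Lindblad form), and both combine it with the subharmonicity $\mathcal{G}(P_I)\geq0$ from Proposition~1 to force $P_IL^\dagger(I-P_I)LP_I=0$; your intermediate step $\tr{\rho_I\,\mathcal{G}(P_I)}=0$ is a harmless variant of the paper's sign comparison $0\leq P_I\mathcal{G}(P_I)P_I=-P_I\mathfrak{D}(P_I)P_I\leq0$. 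Where you genuinely diverge is uniqueness: the paper disposes of it in one line by citing the structure theory of invariant subspaces (Baumgartner--Narnhofer, Schirmer--Wang) to produce non-trivial orthogonal invariant subspaces and hence a non-faithful invariant state, whereas you construct such a non-faithful invariant state by hand via the boundary point $t^\ast=\sup\{t\geq0:\rho_1-t\rho_2\geq0\}$. Your version is self-contained and uses only elementary finite-dimensional facts --- faithfulness means strictly positive minimal eigenvalue, the set $\{t:\rho_1-t\rho_2\geq0\}$ is a closed interval $[0,t^\ast]$ with $0<t^\ast\leq1$ by the trace constraint, and $T_{*t}$ is linear so the renormalized boundary operator is again an invariant state --- while the paper's version buys brevity at the cost of an external structural result. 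Both reduce non-uniqueness to the same contradiction with the faithfulness conclusion, so the logical architecture is identical even though the uniqueness step is executed differently.
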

\begin{proof}
A finite dimensional system is tight by Definition~$5$ and therefore, according to Theorem 1, it admits an invariant
state $\rho_I$.  Let $P_I$ be the support projection of $\rho_I$.

If we take any orthogonal projection $P$, then we have
\begin{equation*}
\mathcal{G}(P)=\mathcal{G}(P^2)=P\mathcal{G}(P)+ \mathcal{G}(P)P +\mathfrak{D} (P),
\end{equation*}
where $\mathfrak{D}$ is the dissipation functional defined in (\ref{dissip.functional}), and so
\begin{equation*}
P\mathcal{G}(P)P=-P\mathfrak{D}(P)P.
\end{equation*}
However we note that $\mathfrak{D}(X) = [X,L]^\dag [X,L] \geq 0$, and in particular,
\begin{equation*}
P \mathfrak{D}(P) P = P L^\dag (I-P) L P.
\end{equation*}

Now take the invariant state $\rho_I$ with support projection $P_I$, then from Proposition~$1$ we will have $ \mathcal{G} (P_I) \geq 0$,
and therefore $P_I \mathcal G(P_I) P_I \geq 0$. But we then must have $P_I \mathfrak{D} (P_I) P_I = 0$, as $\mathfrak{D} \geq 0$.
We thereby deduce that for the invariant state support
\begin{equation}~\label{eq:P_I}
P_I \mathfrak{D}(P_I) P_I = P_I L^\dag (I-P_I) L P_I = 0.
\end{equation}
This is automatically satisfied if $\rho_I$ is faithful, since here $I - P_I \equiv 0$.

Suppose the hypothesis of the theorem is true, namely that $ P L^\dag (I-P) L P\neq0$ for any non-trivial orthogonal projection $P$. If we also now suppose
that $\rho_I$ is not faithful, then $P_I$ is non-trivial, then setting $P=P_I$ in (\ref{eq:P_I}) leads to a contradiction. Therefore, under the hypothesis,
we see that any invariant state must be faithful.

Suppose the invariant state is not unique, then there exist non-trivial orthogonal invariant subspaces by \cite{Bau12,Schirmer10}. This leads to a contradiction since there will exist non-faithful invariant states in the subspace.
\end{proof}

\begin{rem}
Condition $PL^\dagger(I-P)LP\neq0$ means that any non-trivial projection $P$ is connected with its orthogonal complement by $L$. This property can be easily verified when the system has reduced dynamics. For example, if the quantum states maintain a diagonal form $\rho(t)=\sum_i\rho^{ii}(t)P_i,P_i=|i\rangle\langle i|$ during evolution, we only need to verify $P_iL^\dagger(I-P_i)LP_i\neq0$ for all $P_i$. To generalize, if there exists a family of projections $\{P_i\}$ such that $\sum_iP_i=I$ and $P_iL^\dagger(I-P_i)LP_i\neq0$, the marginal distribution of the invariant state will have non-vanishing probability on each projector $P_i$.
\end{rem}

\medskip

It is worth mentioning that for finite dimensional system, uniqueness of invariant state directly leads to global convergence \cite{Schirmer10}.

\begin{exam}\rm
Consider the quantum two-level system with a basis denoted as $\{|0\rangle,|1\rangle\}$. $H=\omega\sigma_z=\omega(|0\rangle\langle0|-|1\rangle\langle1|)$ and $L=\sigma_x=|0\rangle\langle1|+|1\rangle\langle0|$. The quantum state evolves according to the master equation
\begin{equation*}
\frac{d\rho(t)}{dt}=-i[H,\rho(t)]+L\rho(t)L^\dagger-\frac{1}{2}L^\dagger L\rho(t)-\frac{1}{2}\rho(t)L^\dagger L.
\end{equation*}
Obviously the density matrix of the state will remain diagonal if the
initial state is $\alpha|0\rangle\langle0|+\beta|1\rangle\langle1|$ with
arbitrary $\alpha$ and $\beta$ satisfying $|\alpha|^2+|\beta|^2=1$. As a
result, the system will possess a diagonal invariant state. We only need to consider the
projections $\{|1\rangle\langle1|,|0\rangle\langle0|\}$ in order to
conclude faithfulness of this invariant state. In fact, we have
$|1\rangle\langle1|\sigma_x|0\rangle\langle0|\sigma_x|1\rangle\langle1|=|1\rangle\langle1|\neq0$
and
$|0\rangle\langle0|\sigma_x|1\rangle\langle1|\sigma_x|0\rangle\langle0|=|0\rangle\langle0|\neq0$,
so the two-level system has a unique faithful invariant state which is globally
attractive.
\end{exam}

\medskip

\subsection{Stability of invariant states of infinite-dimensional systems}
Now we can prove the main result in this section for the quantum system defined on a separable Hilbert space $\mathfrak{H}$.
\begin{thm}\label{theorem4}
Suppose there exists a coercive Lyapunov operator in the weak sense (\ref{Gv}). If $PL^\dagger(I-P)LP\neq0$ for any non-trivial projection $P$, then any invariant state $\rho_I$ is faithful and unique. Furthermore, this faithful state $\rho_I$ is globally attractive.
\end{thm}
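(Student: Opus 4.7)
The plan is to extend the argument of Theorem~\ref{thm:three} to the separable Hilbert space setting, using the coercive weak Lyapunov operator to supply the compactness that was automatic in finite dimensions. Existence of at least one invariant state $\rho_I$ is immediate from Theorem~2: integrating (\ref{Gv}) gives $\langle V(t)\rangle_\rho\le\lambda$ for all $t\ge 0$, and coercivity of $V$ upgrades this bound to tightness of the family $\{\rho_t\}_{t\ge 0}$, so Theorem~1 applies.

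For faithfulness I would replay the algebraic computation from the proof of Theorem~\ref{thm:three} verbatim. Proposition~1 gives $\mathcal{G}(P_I)\ge 0$, so $P_I\mathcal{G}(P_I)P_I\ge 0$; on the other hand, expanding $\mathcal{G}(P_I)=\mathcal{G}(P_I^2)$ via the dissipation functional yields $P_I\mathcal{G}(P_I)P_I=-P_I\mathfrak{D}(P_I)P_I\le 0$, so both sides must vanish. Using $\mathfrak{D}(P_I)=[P_I,L]^\dagger[P_I,L]$ one reads off $P_IL^\dagger(I-P_I)LP_I=0$, and the algebraic hypothesis then forces $P_I$ to be trivial; since $\rho_I$ is a state we conclude $P_I=I$, i.e., $\rho_I$ is faithful. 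The derivation is purely algebraic and transfers to infinite dimensions without change, the only tacit assumption being that $P_I\in D(\mathcal{G})$.

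For uniqueness I would argue by contradiction. Were two distinct normal invariant states to exist, the ergodic decomposition theory of quantum Markov semigroups possessing a faithful invariant state (in the spirit of Frigerio and Fagnola--Rebolledo) would furnish two extremal normal invariant states whose support projections are mutually orthogonal. Each such extremal state is itself invariant, hence faithful by the previous step, so each support equals $I$; orthogonality then forces $I=0$, which is absurd.

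Global attractiveness finally follows from tightness combined with uniqueness. For any initial state $\rho$, tightness of $\{T_{*t}(\rho)\}_{t\ge 0}$ together with Lemma~1 guarantees that every sequence $\{T_{*t_n}(\rho)\}$ has a convergent subsequence whose limit is invariant, by the Ces\`aro-average argument used in the proof of Theorem~1, and hence equals $\rho_I$ by uniqueness. The standard subsequence principle then upgrades this to convergence of the full family, $T_{*t}(\rho)\to\rho_I$. The hard part will be the uniqueness step: in finite dimensions the authors could cite~\cite{Bau12,Schirmer10} directly, whereas in infinite dimensions the fact that non-uniqueness forces orthogonal extremal invariant supports requires the quantum ergodic decomposition theorem, and one must also check that Ces\`aro limits of $T_{*t}(\rho)$ remain \emph{normal} states rather than escape into the singular part of the dual.
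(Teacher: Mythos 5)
Your existence and faithfulness steps track the paper's proof: existence comes from Theorem~2 via coercivity and (\ref{Gv}), and faithfulness is obtained by rerunning the support-projection computation of Theorem~3 (the paper's proof literally says it "follows along the same lines as the proof of Theorem~3", and a more detailed variant integrates $\mathcal{G}(P_I^2)$ against $\rho_I$ and uses faithfulness of $\rho_I$ on $P_I\mathfrak{A}P_I$ to make $\langle P_IL^\dagger(I-P_I)LP_I\rangle_{\rho_I}$ strictly positive); your caveat about $P_I\in D(\mathcal{G})$ is fair. Your uniqueness step reaches the right conclusion but rests on a false intermediate claim: distinct extremal normal invariant states need \emph{not} have mutually orthogonal supports (take $L=0$, $H=0$: every pure state is an extremal invariant state). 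What is true, and suffices, is that if $\rho_1\neq\rho_2$ are both invariant then the positive and negative parts of $\rho_1-\rho_2$ are separately invariant (by positivity and trace preservation of $T_{*t}$) and have orthogonal supports, producing a non-faithful invariant state and hence a contradiction with the faithfulness step. The paper instead notes that the hypothesis forces $\{H,L,L^\dagger\}'=\mathbb{C}I$ (a projection commuting with $L$ would give $PL^\dagger(I-P)LP=0$) and cites \cite{fagnola2003quantum}.

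The genuine gap is in global attractiveness. You assert that every subsequential limit of the trajectory $T_{*t_n}(\rho)$ is invariant "by the Ces\`aro-average argument used in the proof of Theorem~1". That argument only shows that limit points of the Ces\`aro means $\frac{1}{t_n}\int_0^{t_n}T_{*s}(\rho)\,ds$ are invariant; a limit point of $T_{*t_n}(\rho)$ itself need not be a fixed point of the semigroup (the $\omega$-limit set is invariant as a \emph{set} under the flow, but its individual elements need not be fixed). So tightness plus uniqueness only yields convergence of the ergodic means to $\rho_I$, not $T_{*t}(\rho)\to\rho_I$. This is not a removable technicality: for quantum Markov semigroups a unique faithful normal invariant state does not by itself force return to equilibrium --- one needs the Frigerio--Verri condition that the fixed-point algebra coincide with the decoherence-free algebra, which can fail while the invariant state is unique (persistent unitary rotation on a nontrivial decoherence-free subalgebra). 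Here that condition is supplied by $\{L,L^\dagger\}'=\mathbb{C}I$, which follows from the hypothesis, and this is precisely why the paper closes the argument by citing \cite{Fagnola2004} rather than by subsequence extraction. To repair your proof, either invoke that convergence theorem or verify the Frigerio--Verri condition directly from the algebraic hypothesis.
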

\begin{proof}
The proof follows along the same lines as the proof of
Theorem~$3$. However, the existence of invariant state comes from condition
(\ref{Gv}) and coercivity, and is a direct result of
Theorem~$2$. The Lyapunov operator inequality (\ref{Gv}) and the algebraic
condition $PL^\dagger(I-P)LP\neq0$ are then combined to guarantee the uniqueness and faithfulness of this invariant state which is also the
equilibrium point of the system. In addition, the unique invariant state is also globally attractive due to its faithfulness
\cite{Fagnola2004}.
\end{proof}

We also note a certain analogy between Theorem~4 and the corresponding
results from the classical theory of stochastic Markov processes; e.g.,
see~\cite{MT-1993}.  In particular, our condition (\ref{Gv}) is analogous
to the positive recurrence condition (CD2) in~\cite{MT-1993}.

\medskip

\begin{exam}\rm
Consider again a quantum oscillator with the Hamiltonian $H={\omega}a^\dagger{a}$, and the coupling operator $L=
\alpha{a}+\beta{a^\dagger}$.

Consider the observable $V=a^\dagger{a}$ which has a strictly increasing and unbounded spectrum. $\mathcal{G}(V)=-(|\alpha|^2-|\beta|^2)V+|\beta|^2I$. In order to satisfy the Lyapunov
condition in Theorem~$4$, we need to set $|\alpha|>|\beta|$. In this case,
$\langle V(t)\rangle$ is bounded with respect to any initial state. Hence, according to
Theorem~2, this system admits an invariant state. Now we want to study the set of projections $\{P_i=|i\rangle\langle i|\}$. Note that since $|i\rangle$ is the photon number state, then $\sum_iP_i=I$. We have $|i\rangle\langle i|L^\dagger|i+1\rangle\langle i+1|L|i\rangle\langle i|=(i+1)|\beta|^2|i\rangle\langle i|\neq0$ and $|i\rangle\langle i|L^\dagger|i-1\rangle\langle i-1|L|i\rangle\langle i|=i|\alpha|^2|i\rangle\langle i|\neq0$. Therefore, any photon number state is connected to its two neighboring states, so by induction any non-trivial projection $P=\sum_jP_j\neq I$ is not a support projection of an invariant state $\rho_I$. Consequently, the photon-number distribution of the invariant state has non-vanishing probability on the
entire Fock basis. However, this does not imply that the invariant state is faithful because there may exist other set of projections that does not satisfy the algebraic condition of Theorem \ref{theorem4}.
\end{exam}

\section{Quantum LaSalle Invariance Principle}\label{sec6}
In the previous section, we studied the stability property of convergence
to faithful invariant
states. Other classes of stabilization problems of interest are concerned with
stability of non-commuting operators, or
require convergence to an invariant set for any state trajectories. Similar
to the classical LaSalle's invariance principle \cite{Lasalle68,Mao99} that
is used to identify the asymptotic stability of system trajectories, the
invariance theorems which we will derive here pave the way for analyzing
the underlying dynamics of general quantum states which may not be faithful in the Heisenberg picture.

The classical LaSalle theorem states the following fact \cite{Lasalle68}:
If a positive and uniformly continuous function $V(x)$ can be found on a
compact space such that $\dot V(x)\leq0$, then the limit points of any
trajectory $x_t$ are contained in the largest invariant subset of
$\{x:\dot{V}(x)=0\}$.

First we will derive the direct analogue of the classical LaSalle invariance theorem in the Heisenberg picture.
\begin{Def}
A quantum state $\rho$ is said to be the zero solution of an operator $X$ if $\rho$ solves $\langle X\rangle_\rho=0$.
\end{Def}
\begin{thm}
If there exists a coercive Lyapunov operator $V$ and a positive operator $W$ with $\mathcal{G}(W)$ bounded in the operator norm such that
\begin{equation}
\mathcal{G}(V)\leq-W,
\label{ineqlasa}
\end{equation}
then $\lim_{t\rightarrow\infty}\langle V(t)\rangle_{\rho_0}=\lim_{t\rightarrow\infty}\langle V\rangle_{\rho_t}$ exists for any initial state $\rho_0$ and
\begin{eqnarray}
&\int_0^\infty\langle W(t^{'})\rangle_{\rho_0}dt^{'}=\int_0^\infty\langle W\rangle_{\rho_{t^{'}}}dt^{'}<+\infty,&\nonumber\\
&\lim_{t\rightarrow\infty}\langle
W(t)\rangle_{\rho_0}=\lim_{t\rightarrow\infty}\langle W\rangle_{\rho_t}=0.&
\label{conlasa}
\end{eqnarray}
\end{thm}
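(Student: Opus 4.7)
The plan is to reduce the statement to a scalar analysis of the two real-valued functions $f(t):=\langle V(t)\rangle_{\rho_0}=\langle V\rangle_{\rho_t}$ and $g(t):=\langle W(t)\rangle_{\rho_0}=\langle W\rangle_{\rho_t}$, and to apply three successive classical arguments: monotone convergence, integration of the Lyapunov bound, and a Barbalat-type lemma.

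First, I would exploit the duality between the Heisenberg and Schrödinger pictures established in Section~\ref{sec2}: $\langle X(t)\rangle_{\rho_0}=\tr{\rho_0\, T_t(X)}=\tr{T_{*t}(\rho_0)\,X}=\langle X\rangle_{\rho_t}$, and differentiate. Since $V\in D(\mathcal{G})$ by the definition of a Lyapunov operator, $\frac{d}{dt}f(t)=\langle \mathcal{G}(V)\rangle_{\rho_t}$. By hypothesis~\eqref{ineqlasa} this is bounded above by $-g(t)\le 0$, so $f$ is non-increasing; combined with $V\ge 0$ (hence $f\ge 0$), $f$ converges as $t\to\infty$, which yields the first conclusion.

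Second, I would integrate the inequality $\frac{d}{dt}f(t)\le -g(t)$ over $[0,T]$ to get $\int_0^T g(s)\,ds\le f(0)-f(T)\le f(0)<+\infty$. Letting $T\to\infty$ and invoking monotone convergence (valid since $g\ge 0$) delivers $\int_0^\infty g(s)\,ds<+\infty$, which is the second claim.

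Third, and this is where care is required, the integrability $\int_0^\infty g<\infty$ together with $g\ge 0$ does \emph{not} by itself force $g(t)\to 0$; one needs some regularity of $g$ in $t$. Here the hypothesis that $\mathcal{G}(W)$ is bounded in operator norm comes in: by the same duality as above, $\frac{d}{dt}g(t)=\langle \mathcal{G}(W)\rangle_{\rho_t}$, and $|\langle \mathcal{G}(W)\rangle_{\rho_t}|\le \|\mathcal{G}(W)\|$ because $\rho_t$ is a normalized state. Thus $g$ is Lipschitz on $[0,\infty)$, hence uniformly continuous, and Barbalat's lemma then yields $\lim_{t\to\infty} g(t)=0$.

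The main obstacle I anticipate is the subtle one of justifying the identity $\frac{d}{dt}\langle V\rangle_{\rho_t}=\langle \mathcal{G}(V)\rangle_{\rho_t}$ when $V$ is unbounded, e.g.\ the number operator of Example~\ref{sec3} (coercivity forces $V$ to have unbounded spectrum in infinite dimension). I would sidestep this by invoking $V\in D(\mathcal{G})$ and using the predual generator $\mathcal{G}_*$ acting on trace-class operators, together with dominated convergence to pass derivatives through $\tr{\cdot}$; the Lyapunov bound itself ensures that $\langle V\rangle_{\rho_t}$ stays finite for all $t\ge 0$, making the argument uniform. The analogous step for $W$ is unproblematic because $\mathcal{G}(W)$ is bounded by assumption, so no domain subtleties intrude on the Barbalat step.
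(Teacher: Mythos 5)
Your proposal is correct and follows essentially the same route as the paper: monotonicity of $\langle V\rangle_{\rho_t}$ for the first claim, integration of $\mathcal{G}(V)\leq-W$ for the second, and uniform continuity of $\langle W\rangle_{\rho_t}$ (from the operator-norm bound on $\mathcal{G}(W)$) combined with integrability for the third. The paper's final step is a by-hand subsequence/contradiction argument that is precisely the standard proof of Barbalat's lemma, which you invoke by name; your observation that the differentiation step for unbounded $V$ needs care is a point the paper itself glosses over.
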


\begin{proof}
Referring to Theorem 2, tightness of $\rho_t$ ensures the existence of a limit point (or an accumulation point) of the system evolutions. The function $\langle V(t)\rangle_{\rho_0}$ is decreasing with $t$ since $\mathcal{G}(V)\leq0$. Therefore, $\lim_{t\rightarrow\infty}\langle V(t)\rangle_{\rho_0}=\lim_{t\rightarrow\infty}\langle V\rangle_{\rho_t}$ exists because any decreasing sequence with a lower bound will converge to a limit.
Moreover, $\langle V(t)\rangle_{\rho_0}$ evolves according to
\begin{equation}
\langle V(t)\rangle_{\rho_0}-\langle
V\rangle_{\rho_0}=\int_0^t\langle\mathcal{G} (V(t^{'}))\rangle_{\rho_0}dt^{'}\leq\int_0^t\langle-W(t^{'})\rangle_{\rho_0}dt^{'}.
\label{integene1}
\end{equation}
It follows from (\ref{integene1}) that $\int_0^t\langle
W(t^{'})\rangle_{\rho_0}dt^{'}\leq\langle V\rangle_{\rho_0}$, which implies
\begin{equation*}
\int_0^\infty\langle W(t^{'})\rangle_{\rho_0}dt^{'}=\int_0^\infty\langle W\rangle_{\rho_{t^{'}}}dt^{'}<+\infty.
\end{equation*}
$\rho_t$ is tight, so the positive sequence $\langle W\rangle_{\rho_t}$
must have convergent subsequence. Suppose there exists a subsequence
$\rho_{t_k}$ such that $\lim_{k\rightarrow\infty}\langle
W\rangle_{\rho_{t_k}}=\epsilon>0$. Now we show that this leads to a
contradiction. Since $\mathcal{G}(W)$ is bounded in operator norm by $R$, we
have
\begin{equation*}
|\langle W(t_1)\rangle_{\rho_0}-\langle W(t_2)\rangle_{\rho_0}|=|\int_{t_2}^{t_1}\tr{\mathcal{G}(W)\rho_{t^{'}}}dt^{'}|\leq\int_{t_2}^{t_1}\|\mathcal{G}(W)\||\rho_{t^{'}}|dt^{'}\leq R|t_1-t_2|,
\end{equation*}
which means $\langle W\rangle_{\rho_t}$ is uniformly continuous in $t$. Here $|\rho_{t^{'}}|=1$ denotes the trace-norm of the density state $\rho_{t^{'}}$. According to the uniform continuity, we are able to find a $\delta>0$ such that the following inequality
\begin{equation*}
\langle W\rangle_{\rho_{t^{'}}}>\frac{\epsilon}{2}
\end{equation*}
holds if $|t^{'}-t_k|<\frac{\delta}{2}$ for any $t_k$. This further implies
\begin{equation*}
\int_0^\infty\langle W\rangle_{\rho_t}dt\geq\sum_k^\infty\delta\frac{\epsilon}{2}=+\infty,
\end{equation*}
which is a contradiction.

The above contradiction implies that every converging subsequence of
$\langle W\rangle_{\rho_t}$ converges to 0. Then we conclude that
$\lim_{t\rightarrow\infty}\langle W\rangle_{\rho_t}=0$.
\end{proof}

\begin{rem}
For a Lyapunov operator $V$ with $\mathcal{G}(V)\leq0$, we can always let $W=-\mathcal{G}(V)$ and thus the trajectories will converge to $\{\rho:\langle\mathcal{G}(V)\rangle_\rho=0\}$ if $\mathcal{G}(\mathcal{G}(V))$ is bounded, according to Theorem~$5$. The states from the invariant set $\{\rho:\langle\mathcal{G}(V)\rangle_\rho=0\}$ are zero solutions of $W$. This conclusion is similar to the statement of the classical LaSalle theorem.
\end{rem}

\begin{coro}
If Inequality (\ref{ineqlasa}) in Theorem~$5$ is replaced by
\begin{equation*}
\mathcal{G}(V)\leq U-W,
\end{equation*}
where $U$ is a positive operator satisfying
\begin{equation*}
\int_0^\infty\langle U(t)\rangle_{\rho_0}dt<\infty
\end{equation*}
for any initial state $\rho_0$, the conclusions of Theorem~$5$ still hold.
\end{coro}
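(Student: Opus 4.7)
The plan is to adapt the proof of Theorem~5 by constructing a modified, monotone auxiliary quantity that absorbs the new nonnegative perturbation $U$. First I would define
\begin{equation*}
\phi(t) \;=\; \langle V(t)\rangle_{\rho_0} + \int_t^\infty \langle U(s)\rangle_{\rho_0}\, ds,
\end{equation*}
which is well-defined and bounded below by $0$ thanks to the standing positivity of $V$ and the hypothesis $\int_0^\infty \langle U(s)\rangle_{\rho_0} ds < \infty$. Differentiating along the trajectory and using $\mathcal{G}(V) \le U - W$ gives
\begin{equation*}
\dot\phi(t) \;=\; \langle \mathcal{G}(V(t))\rangle_{\rho_0} - \langle U(t)\rangle_{\rho_0} \;\le\; -\langle W(t)\rangle_{\rho_0} \;\le\; 0,
\end{equation*}
so $\phi$ is monotone decreasing and converges. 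Since $\int_t^\infty \langle U(s)\rangle_{\rho_0} ds \to 0$ as $t\to\infty$, the existence of $\lim_{t\to\infty}\langle V(t)\rangle_{\rho_0}$ follows immediately; moreover $\langle V(t)\rangle_{\rho_0}$ remains bounded for all $t$.

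Next, integrating the differential inequality for $\phi$ from $0$ to $t$ and using $\phi(t)\ge 0$ yields
\begin{equation*}
\int_0^t \langle W(s)\rangle_{\rho_0}\, ds \;\le\; \phi(0) - \phi(t) \;\le\; \phi(0) \;=\; \langle V\rangle_{\rho_0} + \int_0^\infty \langle U(s)\rangle_{\rho_0}\, ds \;<\; \infty,
\end{equation*}
so letting $t\to\infty$ gives the integrability conclusion $\int_0^\infty \langle W(s)\rangle_{\rho_0}ds < \infty$. Note that since $V$ is coercive and $\langle V(t)\rangle_{\rho_0}$ is bounded uniformly in $t$, Theorem~2 still applies and the family $\{\rho_t\}$ is tight.

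For the final statement $\lim_{t\to\infty}\langle W(t)\rangle_{\rho_0}=0$, I would reuse verbatim the uniform-continuity/contradiction argument from Theorem~5: boundedness of $\mathcal{G}(W)$ in operator norm by $R$ gives $|\langle W(t_1)\rangle_{\rho_0}-\langle W(t_2)\rangle_{\rho_0}|\le R|t_1-t_2|$, so $\langle W\rangle_{\rho_t}$ is uniformly continuous in $t$. If some subsequence $\langle W\rangle_{\rho_{t_k}}$ had a positive limit $\epsilon$, uniform continuity would force $\langle W\rangle_{\rho_{s}}>\epsilon/2$ on a collection of intervals of fixed positive length around each $t_k$, contradicting the finiteness of $\int_0^\infty \langle W(s)\rangle_{\rho_0} ds$ just established. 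Hence every convergent subsequence of $\langle W(t)\rangle_{\rho_0}$ tends to $0$, and the limit is $0$.

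The main obstacle is verifying that $\langle V(t)\rangle_{\rho_0}$ stays bounded and has a limit, since it is no longer monotone in $t$; the construction of the surrogate $\phi$ is the key device that resolves this, and once $\phi$ is in hand the remaining two conclusions follow by essentially the same arguments as in Theorem~5. Minor technical points to check carefully are the interchange of differentiation and expectation (which is standard under the domain hypothesis $V,W\in D(\mathcal{G})$ inherited from Theorem~5) and the fact that the tail integral $\int_t^\infty \langle U(s)\rangle_{\rho_0} ds$ is finite for every $t\ge 0$, which is immediate from the assumption on $U$.
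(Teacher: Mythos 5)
Your proof is correct and takes the route the paper intends: the paper's own proof of this corollary is just the one-line remark that it is ``similar to the proof of Theorem~5,'' and your argument is precisely that adaptation, with the surrogate $\phi(t)=\langle V(t)\rangle_{\rho_0}+\int_t^\infty\langle U(s)\rangle_{\rho_0}\,ds$ supplying exactly the monotonicity that the extra term $U$ destroys. The remaining steps (integrability of $\langle W\rangle_{\rho_t}$, tightness via coercivity, and the uniform-continuity contradiction) carry over verbatim as you say, so there is nothing to add.
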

\begin{proof}
The proof is similar to the proof of Theorem~$5$.
\end{proof}

The question is how we can characterize the pairs of operators $V$ and $W$ for which (\ref{ineqlasa}) holds. Note that $\langle\mathcal{G}(V)\rangle_{\rho}\geq0$ for any ground state $\rho$ of $V$ and consequently $\langle W\rangle_{\rho}=0$. More specifically, $W$ must have the ground states of $V$ as its zero solutions. This observation will limit the set of $W$ we can choose from. For example, if $V=a^\dagger a$ is the energy operator of a quantum oscillator, we will not be able to establish $\mathcal{G}(V)\leq-W$ for the position operator $W=(a+a^\dagger)^2$ because the ground state $|0\rangle\langle0|$ of $V$ has nonzero variance in position. In other words, it is impossible to generate states with zero variance in position by stabilizing the energy of the system. This example reveals the fundamental difficulty in stabilizing non-commuting operators, which is also the implication of the Heisenberg uncertainty principle. Nevertheless, through the stability of $V$ we can still infer the information about the non-commuting operators that are restricted in a subspace. In addition, $W$ can also be used to characterize other invariant limit sets of $\rho_t$ besides the set of the ground states of $V$. We illustrate these ideas in the following example:

\begin{exam}\rm
Consider a single-qubit system with energy operator $V=\frac{1}{2}(1+\sigma_z)$. The aim is to make the expectation of the coherence operator $W=\frac{1}{2}(1+\sigma_x)$ zero and in the same time stabilize the energy of the system. However, the non-commuting observables $V$ and $W$ cannot be stabilized simultaneously via $\mathcal{G}(V)\leq-W$ since the ground state $|1\rangle\langle1|$ of $V$ is not the zero solution of $W$. An alternative solution to this problem is to consider the augmented system with an ancillary qubit and define $W=|0\rangle\langle0|\otimes\frac{1}{2}(1+\sigma_{x_2})$. The energy of the two-qubit system is characterized by the operator $V$ as $V=\sigma_{z_1}+\sigma_{z_2}$. $\sigma_{z_i}$ is the Pauli operator $\sigma_z$ acting on the $i$th qubit. The basis of the bipartite system is chosen as the four eigenstates $\{|00\rangle,|01\rangle,|10\rangle,|11\rangle\}$, leading to the following expression of $V$ and $W$
$$
V=\left(
\begin{array}{cccc}
2&0&0&0\\
0&0&0&0\\
0&0&0&0\\
0&0&0&-2
\end{array}
\right),
W=\left(
\begin{array}{cccc}
\frac{1}{2}&\frac{1}{2}&0&0\\
\frac{1}{2}&\frac{1}{2}&0&0\\
0&0&0&0\\
0&0&0&0
\end{array}
\right).
$$
Although $V$ is not positive, Theorem~$5$ still applies to this example by shifting $V$ with a constant. We can engineer $\mathcal{G}(V)$ (See Appendix) through engineering the couplings between the eigenstates. By introducing the couplings $l|01\rangle\langle00|$ and $l|11\rangle\langle01|$ with $|l|^2=\frac{1}{2}$, $\mathcal{G}(V)$ will become
$$
\left(
\begin{array}{cccc}
-1&0&0&0\\
0&-1&0&0\\
0&0&0&0\\
0&0&0&0
\end{array}
\right).
$$
Set the Hamiltonian control $H$ as $-\frac{1}{2}i|00\rangle\langle01|+\frac{1}{2}i|01\rangle\langle00|=|0\rangle\langle0|\otimes\sigma_{y_2}$, the new $\mathcal{G}(V)$ is
$$
\left(
\begin{array}{cccc}
-1&-1&0&0\\
-1&-1&0&0\\
0&0&0&0\\
0&0&0&0
\end{array}
\right)
$$
which satisfies the required inequality $\mathcal{G}(V)\leq-W$. The system will converge to the zero solutions of $W=|0\rangle\langle0|\otimes\frac{1}{2}(1+\sigma_{x_2})$ while the energy operator is stabilized (the energy of the two-qubit system is decreasing).

Given the density matrix of $\rho$ as
$$
\left(
\begin{array}{cccc}
\rho_{00}&\rho_{01}&\rho_{02}&\rho_{03}\\
\rho_{10}&\rho_{11}&\rho_{12}&\rho_{13}\\
\rho_{20}&\rho_{21}&\rho_{22}&\rho_{23}\\
\rho_{30}&\rho_{31}&\rho_{32}&\rho_{33}
\end{array}
\right),
$$
the limit states will satisfy $\langle|0\rangle\langle0|\otimes\frac{1}{2}(1+\sigma_{x_2})\rangle_{\rho}=0$ and hence $\rho_{00}+\rho_{01}+\rho_{10}+\rho_{11}=0$. In this example we are able to infer the information about the coherence $\rho_{01}+\rho_{10}$ between $|00\rangle$ and $|01\rangle$ within the two-level subspace through the generator of the energy operator $V$. Note that $V$ and $W$ do not commute. One particular state satisfying $\rho_{00}+\rho_{01}+\rho_{10}+\rho_{11}=0$ is $\frac{1}{2}(|00\rangle\langle00|-|00\rangle\langle01|-|01\rangle\langle00|+|01\rangle\langle01|)$, which is an invariant state of the system. Note that the space spanned by $\{|10\rangle,|11\rangle\}$ also satisfies $\rho_{00}+\rho_{01}+\rho_{10}+\rho_{11}=0$. We can further narrow down the set of limit points by making $G_{22}$ negative via the methods introduced in the Appendix such that the invariant set will only contain states that are either in the space spanned by $\{|00\rangle,|01\rangle\}$ with stabilized coherence, or in the ground state $|11\rangle\langle11|$.

Moreover, if we make a projection $|0\rangle\langle0|$ on the $1$st qubit via a quantum measurement, the reduced quantum state of the $2$nd qubit will satisfy $\langle\frac{1}{2}(1+\sigma_{x_2})\rangle_{\rho_2}=0$. Interestingly, by stabilizing the energy operator $\sigma_{z_1}+\sigma_{z_2}$ of the augmented system and then making a projective measurement $|0\rangle\langle0|$, we are able to stabilize the coherence operator $\frac{1}{2}(1+\sigma_x)$ of the qubit in the end.

The interpretation of these results is as follows: Extra space is needed to store the excess noises introduced by the Heisenberg uncertainty principle. This idea is similar to the design of non-degenerate parametric amplifier, where additional channel of noise input is introduced in order to amplify the amplitude and phase quadratures simultaneously.
\qed
\end{exam}

For a positive operator $W$ with unbounded $\mathcal{G}(W)$, we have the following theorem.
\begin{thm}
If there exists a Lyapunov operator $V$ and a positive operator $W$ such that
\begin{equation*}
\mathcal{G}(V)\leq-W, \quad\mathcal{G}(W)\leq0,
\end{equation*}
then $\lim_{t\rightarrow\infty}\langle
V(t)\rangle_{\rho_0}=\lim_{t\rightarrow\infty}\langle V\rangle_{\rho_t}$
exists for any state trajectory $\rho_t$ and (\ref{conlasa}) holds.
\end{thm}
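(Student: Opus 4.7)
The plan is to mirror the structure of the proof of Theorem 5, but to replace both the coercivity hypothesis on $V$ and the operator-norm boundedness of $\mathcal{G}(W)$ by the new condition $\mathcal{G}(W)\leq 0$, which turns $W$ itself into a Lyapunov-like observable and lets us bypass the tightness/uniform-continuity machinery altogether. First I would observe that, since $\mathcal{G}(V)\leq -W\leq 0$ and $V\geq 0$, the map $t\mapsto \langle V(t)\rangle_{\rho_0}$ is non-increasing and bounded below by $0$, so it converges; this immediately yields the existence of $\lim_{t\to\infty}\langle V(t)\rangle_{\rho_0}=\lim_{t\to\infty}\langle V\rangle_{\rho_t}$.

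Next I would integrate the generator inequality exactly as in (\ref{integene1}) to obtain
\begin{equation*}
\int_0^t \langle W(t')\rangle_{\rho_0}\,dt' \;\leq\; \langle V\rangle_{\rho_0} - \langle V(t)\rangle_{\rho_0} \;\leq\; \langle V\rangle_{\rho_0}.
\end{equation*}
Since the integrand is non-negative, monotone convergence allows me to let $t\to\infty$ and conclude that $\int_0^\infty \langle W(t')\rangle_{\rho_0}\,dt' = \int_0^\infty \langle W\rangle_{\rho_{t'}}\,dt' < +\infty$, which is the first half of (\ref{conlasa}).

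The final step extracts $\lim_{t\to\infty}\langle W(t)\rangle_{\rho_0}=0$. This is where the assumption $\mathcal{G}(W)\leq 0$ substitutes for the uniform-continuity/subsequence argument used in Theorem 5: it makes $t\mapsto \langle W(t)\rangle_{\rho_0}$ itself non-increasing and bounded below by $0$, so the limit $\ell := \lim_{t\to\infty}\langle W(t)\rangle_{\rho_0}$ exists and is non-negative. If $\ell>0$, then $\langle W(t)\rangle_{\rho_0}\geq \ell$ for all $t\geq 0$, forcing $\int_0^\infty \langle W(t')\rangle_{\rho_0}\,dt' = +\infty$ and contradicting the previous step; hence $\ell=0$.

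I expect no serious obstacle in the inequality manipulations themselves; the only delicate point is justifying that the semigroup identity $\langle W(t)\rangle_{\rho_0}-\langle W\rangle_{\rho_0}=\int_0^t \langle \mathcal{G}(W)(t')\rangle_{\rho_0}\,dt'$ genuinely holds, so that $\mathcal{G}(W)\leq 0$ can be translated into monotonicity of $\langle W(t)\rangle_{\rho_0}$ rather than treated as a merely formal inequality. Provided $W\in D(\mathcal{G})$ (implicit in writing $\mathcal{G}(W)$), this is a standard property of the Markov semigroup $T_t$ introduced in Section~\ref{sec2}, so the argument goes through with no additional regularity assumption on $\mathcal{G}(W)$.
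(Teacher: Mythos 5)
Your proposal is correct and follows essentially the same route as the paper's own proof: integrate $\mathcal{G}(V)\leq -W$ to get the finite integral $\int_0^\infty\langle W\rangle_{\rho_{t'}}\,dt'<+\infty$, then use $\mathcal{G}(W)\leq 0$ to make $\langle W\rangle_{\rho_t}$ monotone non-increasing and bounded below, so that its limit exists and must be zero by finiteness of the integral. Your explicit remarks that monotonicity of $\langle V(t)\rangle_{\rho_0}$ alone (without coercivity or tightness) gives convergence, and that the integrated semigroup identity needs $W\in D(\mathcal{G})$, are correct and merely make explicit what the paper leaves implicit.
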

\begin{proof}
Following the same reasoning as in the proof of Theorem~$5$, we can conclude $\int_0^\infty\langle W\rangle_{\rho_{t}}dt<+\infty$. The conditions $\mathcal{G}(W)\le0$ and $\langle W\rangle_{\rho_{t}}$ is bounded from below guarantee that $\langle W\rangle_{\rho_{t}}$ is convergent. The limit of $\langle W\rangle_{\rho_{t}}$ can only be $0$ because $\int_0^\infty\langle W\rangle_{\rho_{t}}dt$ is finite.
\end{proof}

\begin{rem}
Suppose $\mathcal{G}(V)\leq-cV$, $c>0$, and $V$ is a Lyapunov operator. Let $W=cV$ and we have $\mathcal{G}(W)=c\mathcal{G}(V)\leq-c^2V\leq0$. The system will converge to the zero solutions of $V$, or equivalently speaking, to the set of ground states $Z_V=\{\rho:\langle V\rangle_\rho=0\}$.
\end{rem}

Theorem~$6$ can be extended to treat a general Hermitian operator $W$
\begin{thm}
If there exists a Lyapunov operator $V$ satisfying $\langle V(t)\rangle\leq c$ for $t>0$ and
\begin{equation*}
\mathcal{G}(V)=W,
\end{equation*}
where the generator of $W$ satisfies $\mathcal{G}(W)\leq0$, then
\begin{equation*}
\lim_{t\rightarrow\infty}\langle W(t)\rangle_{\rho_0}=\lim_{t\rightarrow\infty}\langle W\rangle_{\rho_t}=0.
\end{equation*}
\end{thm}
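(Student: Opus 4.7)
The plan is to extract the limit of $\langle W(t)\rangle_{\rho_0}$ by combining a uniform bound on its running integral (coming from the hypothesis on $V$) with monotonicity of $\langle W(t)\rangle_{\rho_0}$ (coming from the hypothesis on $W$).

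The starting point is the Heisenberg evolution of $V$. Since $\mathcal{G}(V)=W$, integrating as in the proof of Theorem~5 gives
\[
\langle V(t)\rangle_{\rho_0}-\langle V\rangle_{\rho_0} \;=\; \int_0^t \langle W(t')\rangle_{\rho_0}\,dt'.
\]
The Lyapunov property $V\geq 0$ forces $\langle V(t)\rangle_{\rho_0}\geq 0$, and combined with the hypothesis $\langle V(t)\rangle_{\rho_0}\leq c$ this confines the right-hand side to the fixed finite interval $[-\langle V\rangle_{\rho_0},\,c-\langle V\rangle_{\rho_0}]$ for every $t\geq 0$. In parallel, differentiating the Heisenberg evolution of $W$ gives $\tfrac{d}{dt}\langle W(t)\rangle_{\rho_0}=\langle \mathcal{G}(W)(t)\rangle_{\rho_0}\leq 0$, so the scalar function $t\mapsto\langle W(t)\rangle_{\rho_0}=\langle W\rangle_{\rho_t}$ is non-increasing and therefore admits a (possibly extended-real) limit $\ell\in[-\infty,\langle W\rangle_{\rho_0}]$.

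The final step pins down $\ell=0$ by an elementary monotone-plus-bounded-integral argument. If $\ell>0$, then monotonicity gives $\langle W(t')\rangle_{\rho_0}\geq \ell$ for all $t'$, so the integral grows at least like $\ell t$ and blows up to $+\infty$, contradicting the upper bound. Symmetrically, if $\ell<0$ (including the case $\ell=-\infty$), then for $t'$ large enough $\langle W(t')\rangle_{\rho_0}\leq \ell/2<0$, driving the integral to $-\infty$ against the lower bound. The only surviving value is $\ell=0$, which is the desired conclusion.

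I do not anticipate a serious obstacle here. The only point to verify carefully is that the two differentiations under the vacuum expectation are justified in the same operator-domain sense already used in the proofs of Theorems~5 and~6, so no new technical machinery is required. Notably, because $\langle W(t)\rangle_{\rho_0}$ is here automatically monotone in $t$, the uniform-continuity route through boundedness of $\mathcal{G}(W)$ used in Theorem~5 can be dropped, which is exactly what allows the present theorem to extend Theorem~6 to non-positive Hermitian $W$.
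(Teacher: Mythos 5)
Your proposal is correct and follows essentially the same route as the paper: integrate $\mathcal{G}(V)=W$ to bound the running integral $\int_0^t\langle W(t')\rangle_{\rho_0}\,dt'$ using $0\le\langle V(t)\rangle\le c$, use $\mathcal{G}(W)\le 0$ to get monotonicity and hence a limit, and then force that limit to be zero by the boundedness of the integral. Your explicit case analysis on the sign of the limit (including $\ell=-\infty$) makes precise the paper's brief assertion that the monotone quantity is bounded below, but the substance is identical.
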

\begin{proof}
$\langle V(t)\rangle$ is bounded for all $t>0$. From $(\ref{integene1})$ we know that $-\infty<\int_0^\infty\langle W\rangle_{\rho_{t}}dt<+\infty$. If $\mathcal{G}(W)\leq0$ by assumption, then the monotonic sequence $\langle W\rangle_{\rho_{t}}$ is bounded from below and hence will converge to a limit. The limit is exactly $0$ since the integral $\int_0^\infty\langle W\rangle_{\rho_{t}}dt$ is bounded.
\end{proof}

\section{Stability within the Invariant Set}\label{sec7}
We have used multiple Lyapunov conditions in Theorem~$6$ and Theorem~$7$. Similarly, we can use additional Lyapunov conditions to further engineer the dynamics of the trajectories within the invariant set. For example, we can make use of the Lyapunov operator $W=V^2$ to drive the system states to the zero solutions of $V$, where in general the system will converge only to the zero solutions of $\mathcal{G}(V)$ by LaSalle invariance principle.

As we have known from classical stochastic stability and quantum semigroup theory, the asymptotic dynamics of the trajectories are determined by the diffusion terms \cite{Khas} or the dissipation functional $\mathfrak{D}(\cdot)$ \cite{Frigerio78,Alberto1982,fagnola2003quantum}. As shown in the proof of Theorem~$3$, we can make explicit connection between the dissipation functional and the diffusion terms $j_t (\mathcal{B}(X)), j_t (\mathcal{C}(X))$ by calculating $\mathcal{G}(V^2)$.

\begin{thm}
Suppose $\mathcal{G}(V)\leq0$ for the Lyapunov operator $V$ of a finite-dimensional system. The state trajectory $\rho_t$ will converge to the set of zero solutions $Z_V=\{\rho:\langle V\rangle_\rho=0\}$ if $\langle[L^\dagger,V][V,L]\rangle_\rho>0$ for $\rho\notin Z_V$ and $[\mathcal G(V),V]=0$.
\end{thm}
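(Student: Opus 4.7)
The plan is to apply the quantum LaSalle invariance principle (Theorem~5) twice, first to $V$ and then implicitly to $V^2$, using the commutation hypothesis $[\mathcal{G}(V),V]=0$ to bridge the two steps.

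First, set $W = -\mathcal{G}(V)\geq 0$. In finite dimension $V$ is trivially coercive and $\mathcal{G}(W)$ is automatically bounded, and the inequality $\mathcal{G}(V)\leq -W$ holds by construction, so Theorem~5 gives $\lim_{t\to\infty}\langle\mathcal{G}(V)\rangle_{\rho_t}=0$. Compactness of the finite-dimensional state space ensures the $\omega$-limit set $\omega(\rho_0)$ is non-empty and, by continuity of $T_{*t}$, invariant. For any $\rho^\ast\in\omega(\rho_0)$, the equality $\langle\mathcal{G}(V)\rangle_{\rho^\ast}=0$ combined with $-\mathcal{G}(V)\geq 0$ forces the support projection $P_{\rho^\ast}$ to lie in $\ker\mathcal{G}(V)$. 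Because $V$ and $\mathcal{G}(V)$ share a joint eigenbasis, $\ker\mathcal{G}(V)$ is $V$-invariant, so $\mathcal{G}(V)V P_{\rho^\ast}=0$ and therefore $\langle V\mathcal{G}(V)\rangle_{\rho^\ast}=0$; by invariance of $\omega(\rho_0)$ the same vanishing holds at every $T_{*t}(\rho^\ast)$.

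Next I would use the identity, valid because $[\mathcal{G}(V),V]=0$,
\[
\mathcal{G}(V^2)=V\mathcal{G}(V)+\mathcal{G}(V)V+\mathfrak{D}(V)=2V\mathcal{G}(V)+\mathfrak{D}(V),
\]
where $\mathfrak{D}(V)=[L^\dagger,V][V,L]$. On the trajectory of $\rho^\ast$ the first term has zero expectation by the preceding step, hence $\tfrac{d}{dt}\langle V^2\rangle_{T_{*t}(\rho^\ast)}=\langle\mathfrak{D}(V)\rangle_{T_{*t}(\rho^\ast)}\geq 0$. This quantity is bounded above by $\|V\|^2$, so $\int_0^\infty\langle\mathfrak{D}(V)\rangle_{T_{*t}(\rho^\ast)}\,dt<\infty$; together with uniform continuity of $\langle\mathfrak{D}(V)\rangle_{T_{*t}(\rho^\ast)}$ (from boundedness of $\mathcal{G}(\mathfrak{D}(V))$ in finite dimension), the Barbalat-type argument from the proof of Theorem~5 yields $\langle\mathfrak{D}(V)\rangle_{T_{*t}(\rho^\ast)}\to 0$.

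To finish, pick any subsequential limit $\rho^{\ast\ast}=\lim_n T_{*t_n}(\rho^\ast)$ with $t_n\to\infty$. Then $\langle[L^\dagger,V][V,L]\rangle_{\rho^{\ast\ast}}=0$, so the hypothesis forces $\rho^{\ast\ast}\in Z_V$. Since $\langle\mathcal{G}(V)\rangle$ vanishes on $\omega(\rho_0)$, the quantity $\langle V\rangle_{T_{*t}(\rho^\ast)}$ is constant in $t$, and passing to the limit along $t_n$ gives $\langle V\rangle_{\rho^\ast}=\langle V\rangle_{\rho^{\ast\ast}}=0$, i.e.\ $\rho^\ast\in Z_V$. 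Every accumulation point of $\rho_t$ therefore lies in $Z_V$, and compactness of the state space upgrades this to $\rho_t\to Z_V$. The main obstacle I expect is the middle step $\mathcal{G}(V)P_{\rho^\ast}=0\Rightarrow\langle V\mathcal{G}(V)\rangle_{\rho^\ast}=0$, which is precisely where the hypothesis $[\mathcal{G}(V),V]=0$ is consumed through the $V$-invariance of $\ker\mathcal{G}(V)$; without this hypothesis the clean decomposition $\mathcal{G}(V^2)=2V\mathcal{G}(V)+\mathfrak{D}(V)$ and the subsequent monotonicity of $\langle V^2\rangle$ on the invariant set both break down.
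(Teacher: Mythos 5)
Your proof is correct, and while it rests on the same central identity as the paper's --- $\mathcal{G}(V^2)=V\mathcal{G}(V)+\mathcal{G}(V)V+\mathfrak{D}(V)$ with $\mathfrak{D}(V)=[L^\dagger,V][V,L]$, and the same endgame of forcing $\langle\mathfrak{D}(V)\rangle$ to vanish on the limit set so that the hypothesis pushes the limit set into $Z_V$ --- the execution is genuinely different. The paper argues by contradiction: it assumes the trajectory stays in $\{\rho:\langle V\rangle_\rho\geq\epsilon\}$, extracts an honest invariant state $\rho_I\notin Z_V$ from the Ces\`aro averages $\frac{1}{t}\int_0^t\rho_{t'}dt'$ via Theorem~1, kills the cross term $\langle V\mathcal{G}(V)+\mathcal{G}(V)V\rangle_{\rho_I}$ by a shift-by-$\beta$ sandwich argument (using $\langle\mathcal{G}(V)\rangle_{\rho_I}=0$ and the commutation hypothesis), and then gets an immediate contradiction because $\langle V^2\rangle_{\rho_I}$ is exactly constant while its derivative $\langle\mathfrak{D}(V)\rangle_{\rho_I}$ is strictly positive. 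You instead work with the $\omega$-limit set, kill the cross term by a support-projection/kernel argument, and then need a second integrability-plus-Barbalat pass on the monotone quantity $\langle V^2\rangle_{T_{*t}(\rho^\ast)}$ followed by a second subsequential limit $\rho^{\ast\ast}$. The paper's route buys brevity: invariance of $\rho_I$ makes $\langle V^2\rangle$ constant, so no second limiting argument is required. Your route buys independence from the Ces\`aro-average construction, at the cost of having to justify invariance of $\omega(\rho_0)$ under the semigroup (standard in finite dimensions, but worth stating). One small remark: since $\mathcal{G}(V)\leq 0$ and $\langle\mathcal{G}(V)\rangle_{\rho^\ast}=0$ already give $\mathcal{G}(V)\rho^\ast=\rho^\ast\mathcal{G}(V)=0$, both cross-term expectations vanish by cyclicity of the trace alone, so your kernel-invariance step (and hence the commutation hypothesis) is doing less work there than you suggest; the same observation applies to the paper's shift argument.
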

\begin{proof}
Since $\mathcal{G}(V)\leq0$, $\lim_{t\rightarrow\infty}\langle V\rangle_{\rho_t}$ exists and $Z_V$ is an invariant set. We only need to prove that $\rho_t$ will exit the domain $\{\rho:\langle V\rangle_\rho\geq\epsilon\}$ for arbitrary $\epsilon>0$. Consider the positive operator $W(V)=V^2$. Similar to the derivations in Theorem~$3$, the generator for $W(V)$
can be calculated using the quantum It\={o} formula
\begin{equation*}
\mathcal{G}(W)=V\mathcal{G}(V)+\mathcal{G}(V)V+\mathfrak{D}(V)
\end{equation*}
with
\begin{equation*}
\mathfrak{D}(V)=\mathcal{B}(V)^2+ \mathcal{C}(V)^2+i\mathcal{B}(V)\mathcal{C}(V)-i\mathcal{C}(V)\mathcal{B}(V)=[L^\dagger,V][V,L].
\end{equation*}
For finite-dimensional system, any state trajectory is tight. Suppose the trajectory $\rho_t$ is restricted to a domain $\{\rho:\langle V\rangle_\rho\geq\epsilon\}$ for some $\epsilon>0$. Then by Theorem~$1$ there exists an invariant state $\rho_I$ which is the limit point of the tight sequence $\frac{1}{t}\int_0^{t}\rho_{t^{'}}dt^{'}$. Note that $\frac{1}{t}\int_0^{t}\rho_{t^{'}}dt^{'}$ is the mean of the sequence $\rho_t$, so $\rho_I$ is in the same domain $\{\rho:\langle V\rangle_\rho\geq\epsilon\}$ as $\rho_t$ which means $\rho_I\notin Z_V$.

Let the initial state be exactly the invariant state $\rho_I$. First we prove $\langle V\mathcal{G}(V)+\mathcal{G}(V)V\rangle_{\rho_I}=0$. Since $V$ is positive and $[\mathcal G(V),V]=0$, $V\mathcal{G}(V)$ and $\mathcal{G}(V)V$ are negative hermitian operators which make $\langle V\mathcal{G}(V)+\mathcal{G}(V)V\rangle_{\rho_I}\leq0$. Furthermore, we have $\langle V\mathcal{G}(V)+\mathcal{G}(V)V\rangle_{\rho_I}=\langle (V+\beta)\mathcal{G}(V)+\mathcal{G}(V)(V+\beta)\rangle_{\rho_I}$ due to the fact that $\langle\mathcal{G}(V)\rangle_{\rho_I}=0$. $V$ is bounded, so we can choose $\beta<0$ such that $V+\beta$ is negative. Given this $\beta$, we can conclude $\langle (V+\beta)\mathcal{G}(V)+\mathcal{G}(V)(V+\beta)\rangle_{\rho_I}\geq0$ which gives us $\langle V\mathcal{G}(V)+\mathcal{G}(V)V\rangle_{\rho_I}\geq0$. So $\langle V\mathcal{G}(V)+\mathcal{G}(V)V\rangle_{\rho_I}=0$. Next we have the following relation by integrating $\mathcal{G}(W)$
\begin{eqnarray}
\langle W(V)\rangle_{\rho_I}-{\langle W(V)\rangle_{\rho_I}}
&=&\int_0^t\langle V\mathcal{G}(V)+\mathcal{G}(V)V\nonumber +\mathfrak{D}(V)\rangle_{\rho_I}dt^{'}\\
&=& \int_0^t\langle \mathfrak{D}(V)\rangle_{\rho_I}dt^{'}.
\end{eqnarray}
The LHS of the equality is zero, however the RHS of the
equality is strictly positive, since $\langle \mathfrak{D}(V)\rangle_{\rho_I}=\langle[L^\dagger,V][V,L]\rangle_{\rho_I}>0$ by assumption. So we arrive at a contradiction. The contradiction shows that a trajectory $\rho_t$ cannot be confined to the domain $\{\rho:\langle V\rangle_\rho\geq\epsilon\}$. Hence $\rho_t$ will approach $Z_V$ asymptotically.
\end{proof}

\begin{coro}
Assume in a finite-dimensional system the Lyapunov operator $V$ has the decomposition $V=M^\dagger M$. If $M$ solves $M=[V,L]$ and then $\mathcal{G}(V)\leq0,[\mathcal G(V),V]=0$, the state trajectory $\rho_t$ will converge to the zero solutions of $V$.
\end{coro}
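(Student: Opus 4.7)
The plan is to derive Corollary 2 as an almost immediate specialization of Theorem 8. The three hypotheses of Theorem 8 are: $\mathcal{G}(V)\leq 0$, $[\mathcal{G}(V),V]=0$, and $\langle[L^\dagger,V][V,L]\rangle_\rho>0$ for every $\rho\notin Z_V$. The first two are assumed outright, so the only work is verifying the third condition from the algebraic decomposition hypothesis.

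The key observation is that since $V$ is self-adjoint, $M=[V,L]$ gives $M^\dagger=[V,L]^\dagger=L^\dagger V-VL^\dagger=[L^\dagger,V]$. Substituting into $V=M^\dagger M$ yields the identity
\begin{equation*}
V=[L^\dagger,V][V,L]=\mathfrak{D}(V),
\end{equation*}
so the operator appearing as the strict-positivity condition in Theorem 8 is literally $V$ itself.

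With this identification in hand, I would argue as follows. Since $V\geq 0$, a state $\rho$ fails to lie in $Z_V=\{\rho:\langle V\rangle_\rho=0\}$ precisely when $\langle V\rangle_\rho>0$. Using the identity above, this is equivalent to $\langle[L^\dagger,V][V,L]\rangle_\rho>0$, which is exactly the remaining hypothesis of Theorem 8. Invoking Theorem 8 then delivers convergence of $\rho_t$ to $Z_V$.

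There is essentially no technical obstacle here; the corollary is really a repackaging of Theorem 8's hypotheses into a compact algebraic criterion. The only point worth stating explicitly is the computation $M^\dagger=[L^\dagger,V]$, which relies on $V^\dagger=V$; once that is in place, the identity $V=\mathfrak{D}(V)$ is forced and the Theorem 8 conclusion transfers verbatim.
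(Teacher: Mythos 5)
Your proposal is correct and is exactly the intended derivation: the identity $M^\dagger=[L^\dagger,V]$ (using $V^\dagger=V$) turns the hypothesis $V=M^\dagger M$ into $V=[L^\dagger,V][V,L]=\mathfrak{D}(V)$, so the strict-positivity condition of Theorem~8 on $\{\rho:\langle V\rangle_\rho>0\}$ is automatic from $V\geq0$ and the definition of $Z_V$. The paper gives no separate proof for this corollary, and your one-line reduction to Theorem~8 is the same route it implicitly takes.
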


\begin{exam}\rm
Consider a qubit with $V=\frac{1}{2}(1+\sigma_z)$, or in matrix expression
$$
V=\left(
\begin{array}{cc}
1&0\\
0&0
\end{array}
\right).
$$
The decomposition is found to be $V=\sigma_+\sigma_-$ with
$$
\sigma_+=\left(
\begin{array}{cc}
0&1\\
0&0
\end{array}
\right),
\sigma_-=\left(
\begin{array}{cc}
0&0\\
1&0
\end{array}
\right).
$$
$\sigma_+^\dagger=\sigma_-=M$. The solution to $M=[V,L]$ is
$$
L=\left(
\begin{array}{cc}
a&0\\
1&b
\end{array}
\right)
$$
with $a$ and $b$ being arbitrary constants. With this $L$, the dissipation part $\mathfrak{L}(V)$ equals
$$
\left(
\begin{array}{cc}
-1&-\frac{b}{2}\\
-\frac{b}{2}&0
\end{array}
\right).
$$
Let $H=0$ and $b=0$, then $\mathcal{G}(V)=\mathfrak{L}(V)\leq0$. The system will converge to the ground state $|1\rangle\langle1|$.

\end{exam}

\section{Conclusion}\label{sec8}
Many theorems concerning asymptotic properties of quantum
Markov semigroups have the existence of a faithful invariant state as an
essential assumption. We have derived sufficient conditions to verify this
assumption. If these sufficient conditions hold, the unique and faithful
state is an equilibrium point which is also globally attractive. Our
approach makes use of the Lyapunov method complemented by additional algebraic
conditions. Our result exhibits some analogy with the classical
Foster-Lyapunov theory concerning the existence of invariant measures of
Markov processes. Beyond invariant states, we have introduced the quantum
invariance principle to characterize the set of limit states of the system
dynamics. More specifically, the system will asymptotically converge to the ground state of an operator $W$ if we are able to engineer the generator of a Lyapunov operator $V$. These invariance theorems are established via a Lyapunov
inequality between these two operators, which has potential to provide useful
tools for stability analysis of the non-commutative algebra associated with
general quantum coherent control systems. Moreover, the system can be
driven further to the ground state of $V$ within the invariant set if additional
conditions on the Lyapunov operator can be engineered. These results may also find essential applications in quantum information processing, since the outcomes of quantum computations can be encoded in the ground state of a particular operator \cite{Verstraete09}.

\section{Appendix}
In this appendix we introduce a constructive method to engineer a negative generator for the Lyapunov operator $V$.

First we will focus on engineering the dissipation part $\mathfrak{L}(V)=\frac{1}{2}(2L^{\dagger}VL-L^{\dagger}LV-VL^{\dagger}L)$ of the generator $\mathcal{G}(V)$ by assuming $[V,H]=0$. In a separable space we can decompose $L$ and $V$ as
$$L=\left(
\begin{array}{cc}
L_{00} & L_{01} \\
L_{10} & L_{11}
\end{array}
\right), V=\left(
\begin{array}{cc}
V_{00} & V_{01} \\
V_{10} & V_{11}
\end{array}
\right).$$$V$ is a positive hermitian operator, so we can always
make $V_{01}=V_{10}=0$ through spectral decomposition. The generator
$\mathcal{G}(V)$ is calculated to be
$$\mathcal{G}=\left(
\begin{array}{cc}
G_{00} & G_{01} \\
G_{10} & G_{11}
\end{array}
\right)$$with
\begin{eqnarray*}
G_{00}&=&(L_{00}^{\dagger}V_{00}L_{00}+L_{10}^{\dagger}V_{11}L_{10})-\frac{1}{2}\{L_{00}^{\dagger}L_{00}+L_{10}^{\dagger}L_{10},V_{00}\},\\
G_{01}&=&(L_{00}^{\dagger}V_{00}L_{01}+L_{10}^{\dagger}V_{11}L_{11})-\frac{1}{2}(L_{00}^{\dagger}L_{01}+L_{10}^{\dagger}L_{11})V_{11}-\frac{1}{2}V_{00}(L_{00}^{\dagger}L_{01}+L_{10}^{\dagger}L_{11}),\\
G_{10}&=&G_{01}^\dagger,\\
G_{11}&=&(L_{01}^{\dagger}V_{00}L_{01}+L_{11}^{\dagger}V_{11}L_{11})-\frac{1}{2}\{V_{11},L_{01}^{\dagger}L_{01}+L_{11}^{\dagger}L_{11}\}.
\end{eqnarray*}
Here we only present the calculations for two-level system. For higher dimensional systems, the blocks of $G,L$ and $H$ will be
matrices or operators. However, we can still do analysis for arbitrary dimensional systems by carefully engineering the two-dimensional subsystems and compensating the interactions between different subsystems. This approach is possible because of the
linearity of the generator $\mathcal{G}$. For example, $G_{00}$ can be divide into
the internal dynamics $2L_{00}^{\dagger}V_{00}L_{00}-\{L_{00}^{\dagger}L_{00},V_{00}\}$
and the interaction with other dimensions $2L_{10}^{\dagger}V_{11}L_{10}-\{L_{10}^{\dagger}L_{10},V_{00}\}$.

\subsection{$V_{00}=V_{11}$}
We can set $L_{00}=0$ and $L_{11}=0$ to make
$G_{01}=G_{10}=0$. However, due to the degeneracy of $V$, we also
have $G_{00}=G_{11}=0$. Therefore, $\mathcal{G}=0$ and the entire
two-dimensional space is irreducible. The off-diagonal elements of $L$ will not affect $\mathcal{G}$.

\subsection{$V_{00}{\neq}V_{11}$}
Without loss of generality we assume $V_{11}-V_{00}=v>0$. Again assuming $L_{00}=0$ and $L_{11}=0$, the
generator becomes
$$\mathcal{G}=\left(
\begin{array}{cc}
vL_{10}^{\dagger}L_{10}&0\\
0&-vL_{01}^{\dagger}L_{01}
\end{array}
\right).$$Now we can set $L_{10}=0,L_{01}=l\neq0$ so that
$G_{00}=0,G_{11}<0$. The coupling operator $L$ for engineering negative $\mathcal{G}(V)$ with non-degenerate spectrum could be
$$L=\left(
\begin{array}{cc}
0&l\\
0&0
\end{array}
\right).$$.

\subsection{$[V,H]\neq0$}
$[V,H]\neq0$ happens when $V$ is not representing the energy of the
system or additional Hamiltonian control $H_c$ is needed for
stabilization. Since $V_{01}=V_{10}=0$, the commutator $C=-i[V,H]$
can be calculated as
$$C=-i\left(
\begin{array}{cc}
[V_{00},H_{00}]&V_{00}H_{01}-H_{01}V_{11}\\
V_{11}H_{10}-H_{10}V_{00}&[V_{11},H_{11}]
\end{array}\right).$$
$C_{00}$ is the internal unitary dynamics within the subspace
$X_{00}$. For two-dimensional system, $V_{ij}$ and $H_{ij}$ are
complex numbers, which gives $C_{00}=C_{11}=0$. If
$V_{00}=V_{11}$, then $C_{01}=C_{10}=0$ and the unitary dynamics
induced by $H$ will not affect $\mathcal{G}$. If $V_{00}\neq V_{11}$,
we set $V_{11}-V_{00}=v>0$ and $L_{10}=0$. In this case, the
generator will still satisfy the relations $G_{00}=0$ and $G_{11}<0$
after adding $C$ to $\mathcal{G}$. However, $G_{01}$ cannot be made
vanish if the diagonal entries of $L$ are all zero. In fact, we have
\begin{equation*}
G_{01}=L_{00}^\dagger L_{01}(V_{00}-V_{11})-iH_{01}(V_{00}-V_{11}),
\end{equation*}
so $L_{00}^\dagger L_{01}=iH_{01}$ must be satisfied. If we choose
$L_{01}=l\neq0$, the coupling operator $L$ should be in the
following form
$$L=\left(
\begin{array}{cc}
-\frac{iH_{01}^*}{l^*}&l\\
0&L_{11}
\end{array}\right)
$$ to completely eliminate the influence of $H$.

\bibliographystyle{plain}

\end{document}